\documentclass[sigconf,screen]{acmart}
\AtBeginDocument{%
  }

\usepackage{multirow}
\usepackage{makecell}
\usepackage{algorithm}
\usepackage{algorithmic}
\usepackage{siunitx}
\usepackage{CJK}

\newtheorem{lemma}{Lemma}
\newtheorem{theorem}{Theorem}
\newtheorem{corollary}{Corollary}

\setcopyright{acmlicensed}
\copyrightyear{2018}
\acmYear{2018}
\acmDOI{XXXXXXX.XXXXXXX}
\acmConference[Conference acronym 'XX]{Make sure to enter the correct
  conference title from your rights confirmation email}{June 03--05,
  2018}{Woodstock, NY}
\acmISBN{978-1-4503-XXXX-X/2018/06}

\begin{document}

\title[ReTokSync: Tokenization Disambiguation]{ReTokSync: Self-Synchronizing Tokenization Disambiguation for Generative Linguistic Steganography}

\author{Yaofei Wang}
\affiliation{%
  \institution{Hefei University of Technology}
  \city{Hefei}
  \country{China}
}
\email{wyf@hfut.edu.cn}

\author{Rui Wang}
\affiliation{%
  \institution{Hefei University of Technology}
  \city{Hefei}
  \country{China}
}
\email{2025170799@mail.hfut.edu.cn}

\author{Weilong Pang}
\affiliation{%
  \institution{Hefei University of Technology}
  \city{Hefei}
  \country{China}
}
\email{pangwl@mail.hfut.edu.cn}

\author{JiaLiang Han}
\affiliation{%
  \institution{Hefei University of Technology}
  \city{Hefei}
  \country{China}
}
\email{hanjl@mail.hfut.edu.cn}

\author{Yuan Qi}
\affiliation{%
  \institution{University of Science and Technology of China}
  \city{Hefei}
  \country{China}
}
\email{qya7ya@mail.ustc.edu.cn}

\author{Donghui Hu}
\affiliation{%
  \institution{Hefei University of Technology}
  \city{Hefei}
  \country{China}
}
\email{hudh@hfut.edu.cn}

\author{Kejiang Chen}
\affiliation{%
  \institution{University of Science and Technology of China}
  \city{Hefei}
  \country{China}
}
\email{chenkj@ustc.edu.cn}

\renewcommand{\shortauthors}{Wang et al.}

\renewcommand{\shortauthors}{Trovato et al.}

\begin{abstract}

Generative linguistic steganography (GLS) enables covert communication by embedding secret messages into the natural language generation process. In practical deployment, however, GLS is vulnerable to tokenization ambiguity: the same surface text may be re-tokenized into a different token sequence at the receiver, breaking the shared decoding state between the communicating parties so that a single local mismatch can propagate into complete extraction failure. Existing solutions either remove ambiguous tokens---distorting the generation distribution and compromising security---or preserve the distribution at the cost of substantially reduced embedding capacity or prohibitive runtime overhead. To address this issue, we propose ReTokSync (Re-Tokenization Synchronization), a self-synchronizing disambiguation framework that monitors the receiver-view tokenization during generation and triggers a corrective reset only when ambiguity actually occurs. By confining the effect of tokenization ambiguity to sparse residual bit errors rather than global desynchronization, ReTokSync leaves ambiguity-free positions entirely untouched and remains compatible with the underlying steganographic algorithm. Experiments on both English and Chinese settings show that ReTokSync stays closest to the steganographic baseline in distributional security (zero KL divergence), text quality, embedding capacity, and runtime, while achieving extraction accuracy above 99.7\%. Building on this property, we further develop a two-channel covert communication mechanism in which ReTokSync serves as the primary channel and a reliable auxiliary channel corrects the remaining errors, achieving 100\% end-to-end recovery across all evaluated configurations.

\end{abstract}


\begin{CCSXML}
<ccs2012>
   <concept>
       <concept_id>10002978.10003022.10003028</concept_id>
       <concept_desc>Security and privacy~Domain-specific security and privacy architectures</concept_desc>
       <concept_significance>300</concept_significance>
       </concept>
   <concept>
       <concept_id>10010147.10010178.10010179.10010182</concept_id>
       <concept_desc>Computing methodologies~Natural language generation</concept_desc>
       <concept_significance>500</concept_significance>
       </concept>
 </ccs2012>
\end{CCSXML}

\ccsdesc[300]{Security and privacy~Domain-specific security and privacy architectures}
\ccsdesc[500]{Computing methodologies~Natural language generation}



\keywords{generative linguistic steganography, tokenization ambiguity, large language models, self-synchronization, covert communication}

\received{20 February 2007}
\received[revised]{12 March 2009}
\received[accepted]{5 June 2009}

\maketitle

\section{Introduction}

Recent advances in large language models (LLMs)~\cite{openai2023gpt4,anthropic2024claude3,geminiteam2023gemini,touvron2023llama2} have substantially improved the practical viability of generative linguistic steganography (GLS)~\cite{ziegler2019neural,zhang2021provably,kaptchuk2021meteor,ding2023discop,liao2025framework,bai2025shimmer,wang2025sparsamp,wang2026anstega}. Unlike traditional text steganography, which embeds information by modifying an existing carrier, GLS encodes secret messages directly into the autoregressive generation process of a language model. At each generation step, the steganographic algorithm selects a token from the model's next-token distribution such that the resulting text remains natural while implicitly conveying hidden information.

A central obstacle to the practical deployment of GLS is \emph{tokenization ambiguity}~\cite{nozaki2022addressing}. Modern language models commonly rely on subword tokenization schemes such as BPE, whose vocabularies contain many prefix-related units~\cite{sennrich2016rare,wu2016gnmt,kudo2018sentencepiece,kudo2018subword}. Consequently, the same surface string may admit multiple valid tokenization paths. In GLS, however, correct extraction requires the sender and the receiver to follow the same token sequence. Once the receiver re-tokenizes the stegotext differently from the sender, the two parties no longer share the same decoding state, and the resulting mismatch can propagate through subsequent probability reconstruction and message extraction, ultimately causing extraction failure~\cite{ueoka2021frustratingly}. Effective handling of tokenization ambiguity is therefore essential for reliable and deployable generative steganography~\cite{bauer2024leveraging}.

Existing studies on tokenization disambiguation in GLS can be broadly divided into two categories: \emph{token-removal methods} and \emph{distribution-preserving disambiguation methods}. Token-removal methods improve decodability by excluding risky candidates from the sender-side candidate set, but they explicitly alter the model's original next-token distribution and therefore cannot provide strict distributional security~\cite{yan2025addressingti,nozaki2022addressing,yan2023securedisambiguating}. Distribution-preserving methods instead seek to maintain the original generation distribution while ensuring reliable recovery. Pool-based methods achieve this goal by synchronizing over ambiguity-aware token groups, but at the cost of reduced embedding capacity~\cite{qi2025syncpool}. Verification-based methods are more flexible, yet their receiver-side checking incurs substantial runtime overhead in realistic settings~\cite{yan2024nearimperceptible}. Taken together, existing approaches reveal a persistent tension between security and efficiency.


We argue that this tension stems from two insufficiently exploited properties of tokenization ambiguity. First, ambiguity is typically triggered only at sparse local positions, yet existing methods impose global intervention at every step. Second, the main damage is not the local error itself but the subsequent sender--receiver desynchronization that propagates to all later extraction steps. As illustrated in Figure~\ref{fig:ambiguity-impact}, even a single local mismatch may suffice to invalidate all subsequent extraction results.

To address this problem, we propose ReTokSync, a self-synchro\-nizing
framework for tokenization disambiguation in GLS. ReTokSync tracks the
receiver-\allowbreak view tokenization result during generation, detects
ambiguity online, and immediately resets the sender's embedding state using
the corresponding decoding result. By doing so, it confines the effect of tokenization ambiguity to local residual deviations instead of allowing it to escalate into persistent desynchronization and global extraction failure. Because correction is triggered only when ambiguity actually occurs, ReTokSync leaves ambiguity-free positions unchanged and remains compatible with the underlying steganographic algorithm. Building on this property, we further develop a two-channel communication mechanism for continuous interaction, in which ReTokSync serves as the primary channel and Syncpool serves as a reliable auxiliary channel for correcting residual errors~\cite{qi2025syncpool}.

Experiments on both English and Chinese settings show that ReTokSync remains closest to the underlying steganographic baseline in distributional security, text quality, embedding capacity, and runtime, while reducing ambiguity-induced failures to sparse residual errors. Under the proposed two-channel communication mechanism, these residual errors can be fully repaired, enabling reliable end-to-end recovery.

The main contributions of this paper are summarized as follows:
\begin{enumerate}

    \item We identify a key limitation of existing disambiguation methods for GLS: tokenization ambiguity is triggered only at sparse local positions, yet existing methods impose global intervention at every generation step. We provide empirical evidence that ambiguity-triggering tokens consistently account for less than 1\% of the generated sequence, even as sample-level ambiguity frequency grows with sequence length.

    
    \item We propose \textbf{ReTokSync}, a self-synchronizing disambiguation framework that performs sender-side online ambiguity detection and corrective reset only when needed, thereby preventing local mismatch from propagating into global extraction failure. ReTokSync achieves zero KL divergence from the baseline distribution with less than 2\% relative time overhead, while reducing ambiguity-induced failures to sparse residual bit errors.


    \item We further develop a two-channel communication mechanism for continuous interaction, which combines ReTokSync as a high-efficiency primary channel with a low-rate auxiliary channel for residual-error correction, achieving 100\% end-to-end recovery across all evaluated configurations.

\end{enumerate}

\begin{figure}[t]
  \centering
  \includegraphics[width=\linewidth]{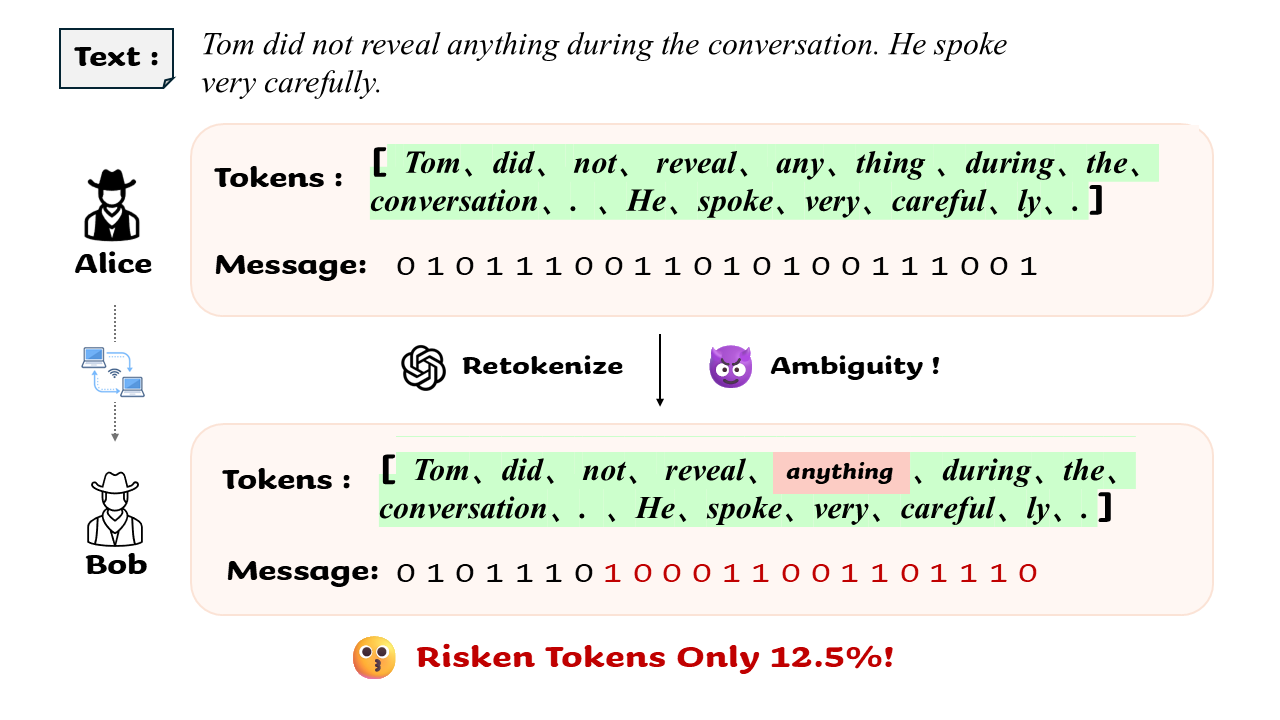}
  \caption{Impact of tokenization ambiguity in generative linguistic steganography. Although ambiguity is pervasive at the sample level, the tokens that trigger it are sparse. Once triggered, a single local mismatch can cause global extraction failure.}

  \Description{Two examples illustrating tokenization ambiguity in generative linguistic steganography. In both cases, only a small number of risky tokens trigger ambiguity, showing the sparsity of ambiguity-triggering tokens. However, once ambiguity occurs, the receiver tokenizes the same visible text differently from the sender, loses synchronization, and subsequently extracts an incorrect message.}
  \label{fig:ambiguity-impact}
\end{figure}

\section{Related Work}

Existing approaches to tokenization disambiguation in generative linguistic steganography can be broadly grouped into two categories: token-removal methods and distribution-preserving methods.

Token-removal methods resolve ambiguity by excluding candidate tokens that may induce inconsistent re-tokenization at the receiver. Nozaki and Murawaki~\cite{nozaki2022addressing} remove prefix-conflicting tokens to prevent multiple tokenization outcomes for the same surface string. Yan et al.~\cite{yan2023securedisambiguating} further propose MWIS, which formulates disambiguation as a maximum weight independent set problem so as to retain as much probability mass as possible under token-conflict constraints. Yan and Murawaki subsequently introduce Stepwise Verification~\cite{yan2025addressingti,yan2025lowoverhead}, which identifies risky candidates through stepwise re-encoding and disables tokens that may cause ambiguity at the current generation step. Although these methods improve decodability, they all rely on modifying the sender-side candidate set before sampling. As a consequence, they cannot preserve the original next-token distribution exactly. From the perspective of distributional security, this means that ambiguity is avoided by introducing distribution distortion, which can be reflected in non-zero KL divergence from the underlying model distribution. Therefore, token-removal methods are difficult to reconcile with the strict security requirement of provably secure steganographic generation.

Distribution-preserving methods instead seek to maintain the original generation distribution while still enabling reliable recovery. Qi et al.~\cite{qi2025syncpool} propose SyncPool, which constructs ambiguity-aware token pools and uses these pools, rather than individual tokens, as the embedding unit. This design preserves distributional security, but sacrifices embedding capacity because part of the available entropy must be reserved for synchronization within the pool. Yan et al.~\cite{yan2024nearimperceptible} propose a verification-based method that shifts disambiguation to the receiver by enumerating possible re-tokenization paths and verifying which path matches the sender's actual token sequence. This strategy avoids direct distortion of the sender-side distribution, but incurs substantial decoding overhead, especially when the candidate space becomes large. Thus, even among distribution-preserving methods, the cost of disambiguation remains significant: one line of work pays with reduced embedding efficiency, whereas the other pays with prohibitive runtime overhead.


Overall, existing methods reveal that the central difficulty lies not in whether ambiguity can be removed, but in how to do so without distorting the generation distribution or incurring excessive overhead. Motivated by the observation that ambiguity is globally possible but locally sparse, and that its primary damage stems from desynchronization rather than the local error itself, we design a self-synchronizing framework that intervenes only when ambiguity actually occurs.

\section{Method}

\subsection{The ReTokSync Framework}

Guided by the above observation, we present \textbf{ReTokSync}, a self-synchronizing framework for tokenization disambiguation in GLS. Unlike existing approaches that either impose global constraints on the generation process or rely on receiver-side verification over multiple tokenization paths, ReTokSync addresses ambiguity through targeted synchronization maintenance. During generation, it continuously tracks the token sequence that the receiver would obtain by re-tokenizing the current surface text. Whenever this receiver-view sequence diverges from the sender's predicted sequence, ReTokSync performs a corrective reset based on the corresponding receiver-view decoding result, so that subsequent embedding can proceed from a synchronized state.

This design directly targets the practical consequence of tokenization ambiguity, namely error propagation caused by sender--receiver desynchronization. Because intervention is triggered only when ambiguity is actually introduced, ReTokSync leaves ambiguity-free generation steps unchanged and avoids unnecessary interference with the underlying steganographic process. As a result, the framework remains compatible with the base steganographic scheme and is designed to preserve its security and efficiency as much as possible while preventing local mismatch from escalating into global extraction failure. The complete procedure is given in Algorithm~\ref{alg:retoksync}.


\begin{figure*}[t]
  \centering
  \includegraphics[width=\textwidth]{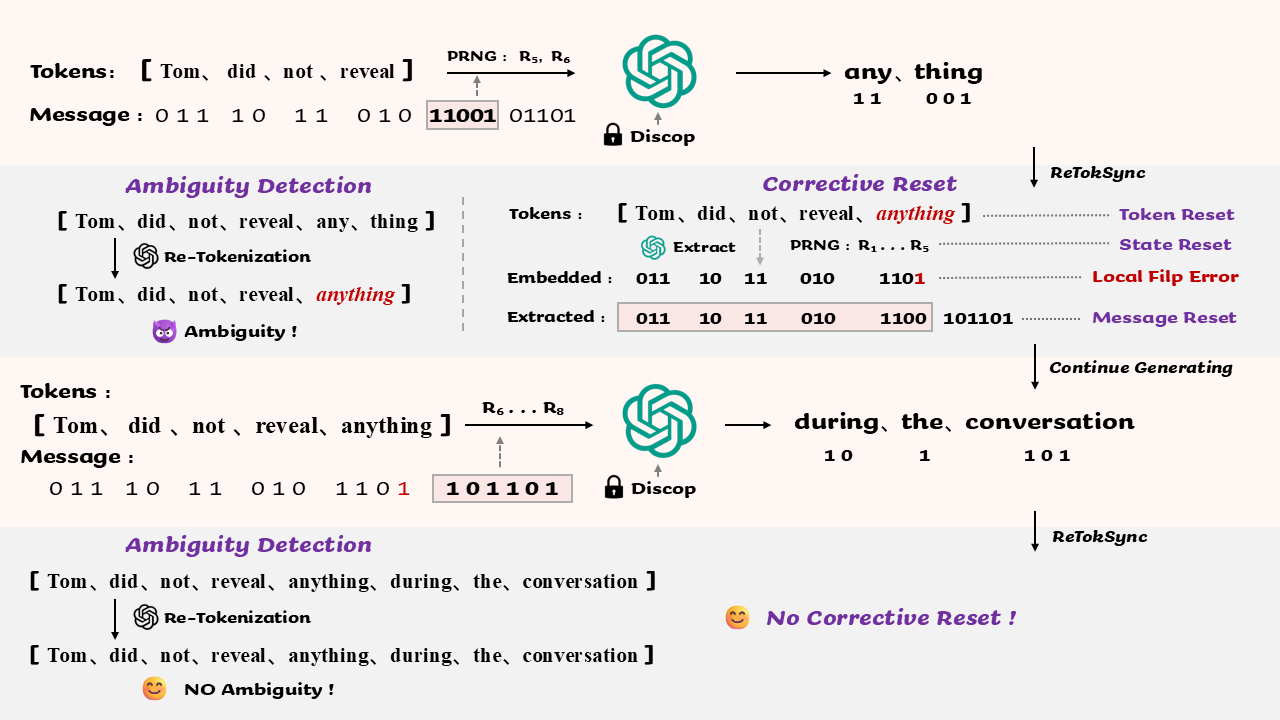}
  \caption{Illustration of ReTokSync. When tokenization ambiguity is detected, the sender performs corrective reset to prevent ambiguity-induced errors from propagating to subsequent positions; when no ambiguity occurs, no corrective reset is needed.}
  \Description{Illustration of ReTokSync. Upon detecting tokenization ambiguity, the sender performs corrective reset to prevent ambiguity-induced errors from propagating to subsequent positions. When no ambiguity occurs, no corrective reset is required.}
  \label{fig:retoksync-example}
\end{figure*}

\subsection{Re-tokenization}

During generation, the sender maintains two token sequences: the \emph{true sequence} $x$ and the \emph{re-tokenized sequence} $\tilde{x}$. The true sequence $x$ records the tokens actually generated by the sender, while the re-tokenized sequence $\tilde{x}$ represents the tokenization result that would be obtained by the receiver from the visible text. Formally, after appending a newly generated token $x_t$ to the current true sequence, the sender updates
\begin{equation}
\tilde{x} = \operatorname{Tok}(\operatorname{Detok}(x)).
\label{eq:retok}
\end{equation}

Crucially, subsequent generation is conditioned on $\tilde{x}$ rather than on $x$. This design ensures that the sender always uses the same token context as the receiver, because $\tilde{x}$ is exactly the sequence obtained by re-tokenizing the visible text decoded from $x$. As a result, both parties rely on the same conditional distribution at each step:
\begin{equation}
p_{\theta}(\cdot \mid \tilde{x})
=
p_{\theta}(\cdot \mid x^{(R)}),
\label{eq:shared-dist}
\end{equation}
where $x^{(R)} = \operatorname{Tok}(\operatorname{Detok}(x))$ denotes the receiver-side tokenization result for the same visible text. Sharing the same conditional distribution is the prerequisite for maintaining correct synchronization under tokenization ambiguity.

\subsection{Ambiguity Detection}

At each generation step, let $x_t$ denote the newly sampled token. Before re-tokenization, the sender constructs the predicted receiver-view sequence
\begin{equation}
\tilde{x}^{\mathrm{pred}} = \tilde{x} \mathbin{\|} [x_t],
\label{eq:pred}
\end{equation}
where $\|$ denotes concatenation. After appending $x_t$ to the true sequence, the sender re-tokenizes the updated text and obtains the actual receiver-view sequence
\begin{equation}
\tilde{x}^{\mathrm{retok}} = \operatorname{Tok}(\operatorname{Detok}(x)).
\label{eq:retok-actual}
\end{equation}

Tokenization ambiguity is detected by the consistency check
\begin{equation}
\tilde{x}^{\mathrm{pred}} \stackrel{?}{=} \tilde{x}^{\mathrm{retok}}.
\label{eq:ambiguity-check}
\end{equation}
If Eq.~(\ref{eq:ambiguity-check}) holds, the new token does not alter the receiver's tokenization structure beyond direct appending. Otherwise, ambiguity has been introduced, and the sender must perform a corrective reset. This criterion directly matches the operational test used in Algorithm~\ref{alg:retoksync}.

\subsection{Corrective Reset}

When ambiguity is detected, the sender decodes the current receiver-view sequence $\tilde{x}^{\mathrm{retok}}$ and uses the decoding result to reset its embedding state:
\begin{equation}
(\hat{m}_{1:\hat{j}}, \hat{s}) \leftarrow \mathsf{Dec}(\tilde{x}^{\mathrm{retok}}, s_0),
\label{eq:decode-reset}
\end{equation}
followed by
\begin{equation}
j \leftarrow \hat{j}, \qquad s \leftarrow \hat{s}.
\label{eq:state-reset}
\end{equation}
Here, $\hat{j}$ denotes the message pointer recovered from the receiver-view sequence, and $\hat{s}$ denotes the corresponding internal decoding state. The sender then continues generation from the reset state while replacing $\tilde{x}$ with $\tilde{x}^{\mathrm{retok}}$.

As illustrated in Figure~\ref{fig:retoksync-example}, when the tokens \textit{any} and \textit{thing} are re-tokenized as \textit{anything}, ambiguity is detected and ReTokSync resets the token sequence, PRNG state, and embedding pointer according to the receiver-view result. Because message extraction is then performed on \textit{anything}, this step may introduce a local 1-bit error. In the subsequent generation, no further ambiguity is detected, and thus no additional reset is required. 

\paragraph{Security analysis.}
Two properties underpin ReTokSync's distributional security.
First, because generation is always conditioned on $\tilde{x}$ (Eq.~\ref{eq:shared-dist}), the conditional distribution at every step is identical to the one the receiver observes; this holds regardless of whether a reset has just occurred. Second, the corrective reset operates exclusively on the \emph{internal} embedding state ($j$, $s$) and does not modify the next-token distribution seen by an external observer. Consequently, the token-level generation distribution under ReTokSync coincides with that of the underlying steganographic scheme, yielding zero KL divergence. A formal analysis is provided in the supplementary material.

\begin{algorithm}[t]
\caption{ReTokSync: Correction-and-Reset under Tokenization Ambiguity}
\label{alg:retoksync}
\begin{algorithmic}[1]
\STATE \textbf{Require:} payload bitstring $\mathbf{m}$, tokenizer $\operatorname{Tok}/\operatorname{Detok}$, stego encoder $\mathsf{Enc}$, stego decoder $\mathsf{Dec}$, initial state $s_0$
\STATE $x \gets [\,]$, $\tilde{x} \gets [\,]$, $j \gets 0$, $s \gets s_0$
\FOR{$t = 1,2,\ldots$}
    \STATE $(x_t, j, s) \gets \mathsf{Enc}(\tilde{x}, \mathbf{m}, j, s)$
    \STATE $x \gets x \mathbin{\|} [x_t]$
    \STATE $\tilde{x}^{\mathrm{pred}} \gets \tilde{x} \mathbin{\|} [x_t]$
    \STATE $\tilde{x}^{\mathrm{retok}} \gets \operatorname{Tok}(\operatorname{Detok}(x))$
    \IF{$\tilde{x}^{\mathrm{pred}} \neq \tilde{x}^{\mathrm{retok}}$}
        \STATE $(\hat{m}_{1:\hat{j}}, \hat{s}) \gets \mathsf{Dec}(\tilde{x}^{\mathrm{retok}}, s_0)$
        \STATE $j \gets \hat{j}$, $s \gets \hat{s}$
    \ENDIF
    \STATE $\tilde{x} \gets \tilde{x}^{\mathrm{retok}}$
\ENDFOR
\end{algorithmic}
\end{algorithm}

\begin{figure}[t]
  \centering
  \includegraphics[width=\columnwidth]{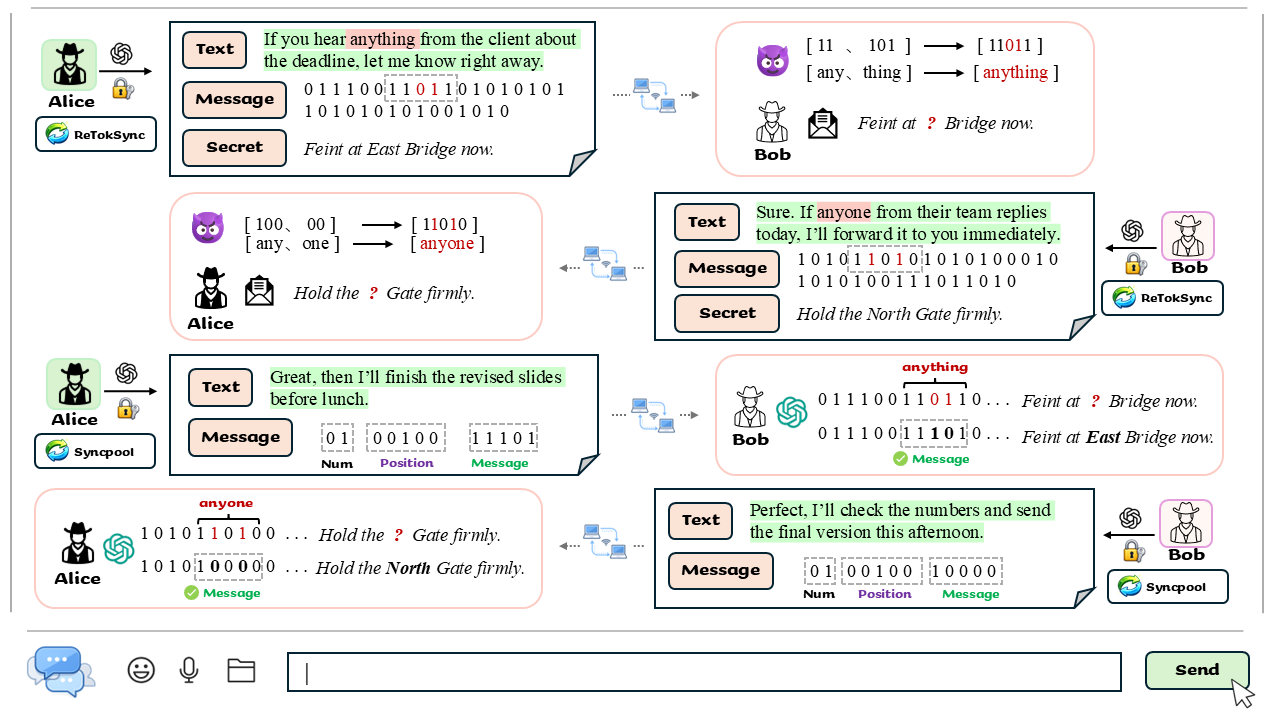}
  \caption{An example of covert communication under continuous interaction. ReTokSync is used as the main channel to transmit the primary secret payload, while Syncpool serves as an auxiliary reliable channel that periodically carries correction information in later interaction rounds, enabling accurate end-to-end recovery.}
  \Description{An example of multi-turn covert communication in a continuous interaction setting. ReTokSync is used to transmit the main secret payload, while Syncpool carries correction information in later interaction rounds to repair residual local errors and enable accurate end-to-end recovery.}
  \label{fig:continuous-communication}
\end{figure}

\subsection{A Covert Communication Mechanism for Continuous Interaction}


Building on the ReTokSync framework, we develop a covert communication mechanism that extends single-sample disambiguation to practical continuous interaction, enabling system-level handling of tokenization ambiguity.

\subsubsection{Background and Design Basis}


Existing research on generative steganography typically models communication as a one-shot process of text generation and message extraction \cite{ding2023discop}. In realistic public channels, however, covert information transmission more often takes place through continuous language interaction---email exchanges, instant messaging, or multi-turn online conversations. Covert communication may likewise arise in multi-agent interaction \cite{motwani2024secretcollusion,liu2024iagents}. To handle tokenization ambiguity reliably without sacrificing security or efficiency, we construct a two-channel communication mechanism: ReTokSync serves as the primary channel for transmitting the main secret payload, while Syncpool serves as an auxiliary reliable channel for correction \cite{qi2025syncpool}.


This design rests on three observations. First, ReTokSync confines ambiguity-induced damage to a very small number of bit flips rather than global desynchronization, substantially reducing the correction payload. Second, because the sender maintains synchronization with the receiver, it can predict the receiver's extraction result at each token, making structured correction feasible. Third, ReTokSync and Syncpool are complementary: the former provides high embedding efficiency with sparse residual errors, while the latter guarantees error-free disambiguation at lower embedding capacity. Together, they enable efficient primary transmission with lightweight residual-error repair.

\subsubsection{Workflow}

In operation, we adopt a group-wise correction strategy. After transmitting \(n\) steganographic samples through the primary channel, the sender sends one additional correction message for that group through the auxiliary channel. This message provides a unified description of all residual errors within the group.

On the sender side, the receiver-view token sequences of all samples in the group are first concatenated in order to form a group-level token sequence. For each token, the sender records the message fragment that would actually be extracted by the receiver during decoding. Since the sender also knows the original target payload, it can align the extracted message stream with the intended one at the token level, thereby identifying the tokens that contain errors and the correct message fragments that should replace them. The correction message is composed of a set of correction items, each corresponding to a token carrying erroneous bits and specifying the correct message fragment used to replace its erroneous extraction result.

As illustrated in Figure~\ref{fig:continuous-communication}, each correction message contains the number of corrections, the token positions to be corrected, and the corresponding replacement message fragments. The detailed message format and its construction are described in the supplementary material.

On the receiver side, standard extraction is first performed on all samples in the group, and the corresponding token-level message fragments are recorded in the same order. The receiver then parses the correction message transmitted through the auxiliary channel, locates the tokens to be corrected, and replaces the originally extracted erroneous fragments with the corrected ones specified in the message. After all corrections are applied, the receiver obtains the correct group-level message stream, which can then be mapped back to the extraction results of individual samples, thereby recovering the final secret message for each sample.

\section{Experiments}

\begin{figure}[t]
  \centering
  \includegraphics[width=\linewidth]{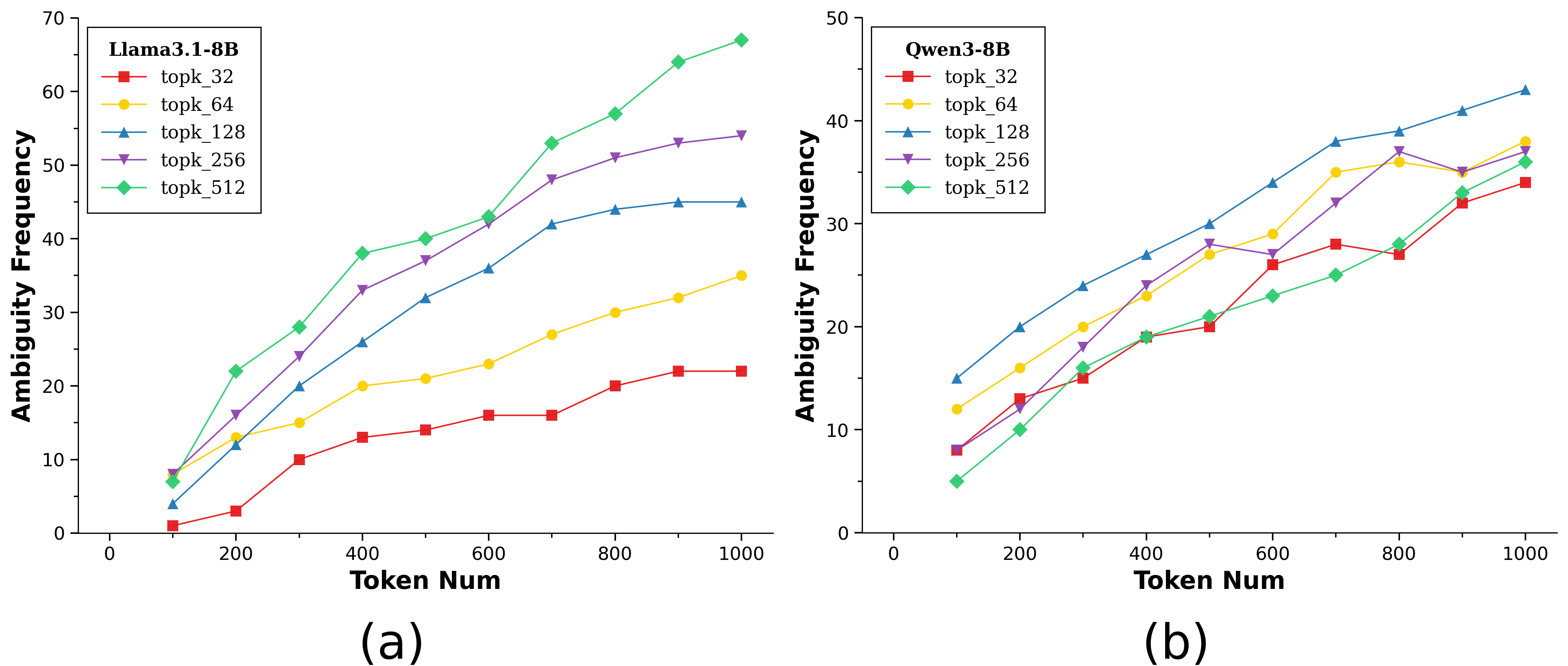}
  \caption{Tokenization-ambiguity frequency analysis on IMDB using Llama-3.1-8B and Qwen3-8B.}
  \Description{A figure showing how often tokenization ambiguity occurs under different generation lengths and top-k settings on the IMDB dataset for Llama-3.1-8B and Qwen3-8B.}
  \label{fig:frequency1}
\end{figure}

\begin{figure}[t]
  \centering
  \includegraphics[width=\linewidth]{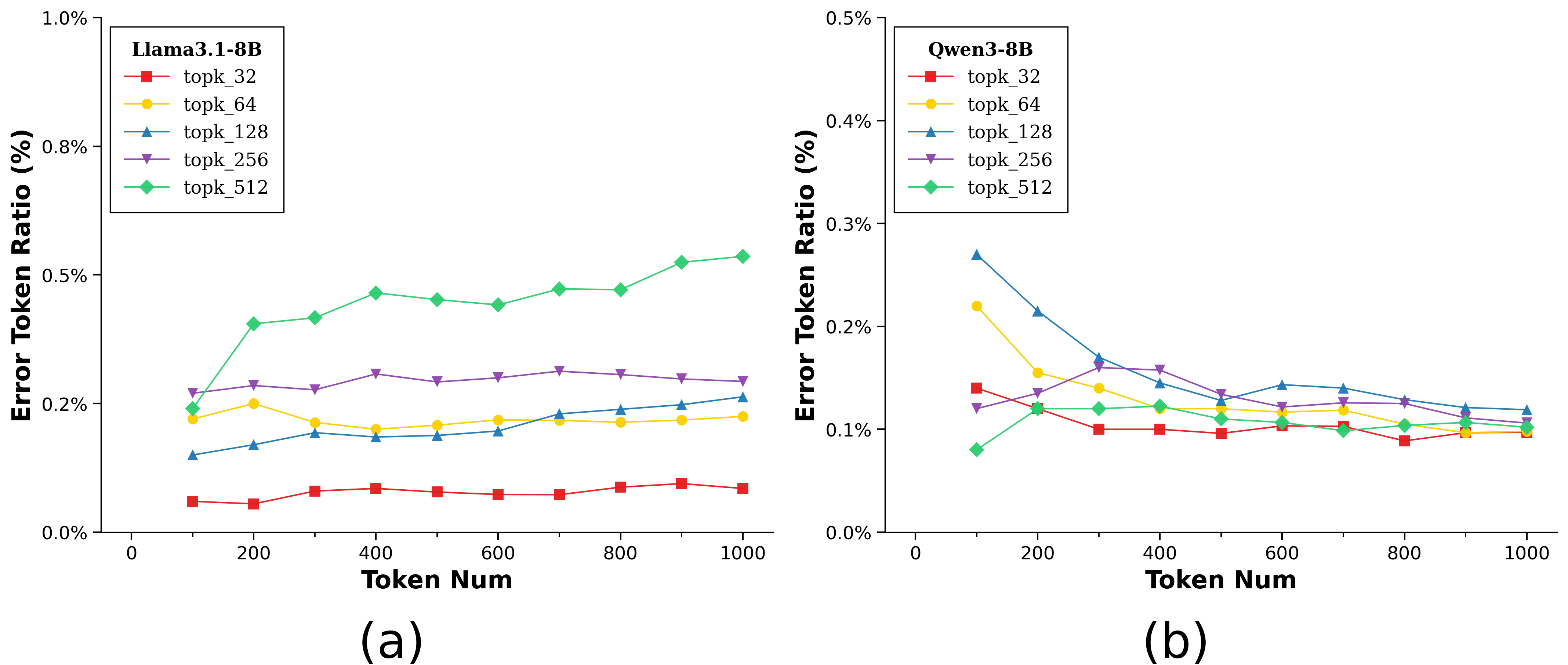}
  \caption{Frequency analysis of ambiguity-triggering tokens on IMDB using Llama-3.1-8B and Qwen3-8B.}
  \Description{A figure showing the proportion of generated tokens that actually trigger tokenization ambiguity under different generation lengths and top-k settings on the IMDB dataset for Llama-3.1-8B and Qwen3-8B.}
  \label{fig:frequency2}
\end{figure}

\subsection{Experimental Setup}

We primarily evaluate our method on the Llama-3.1-8B model \cite{grattafiori2024llama3herd} using the IMDB English dataset \cite{maas2011learning}.
To further assess the generality of the proposed disambiguation strategy, we also conduct experiments with Qwen3-8B \cite{yang2025qwen3} on a Chinese-translated version of the IMDB dataset.
For the steganographic task, the secret message is a randomly generated binary bitstream.
In each run, we use the first three sentences of a dataset example as the initial context and generate 100 subsequent tokens.

We adopt top-$k$ truncation \cite{holtzman2020curious} during generation, using $k=32$, $128$, and $512$ to directly examine how candidate-pool size affects disambiguation behavior and overall performance.


We adopt Discop \cite{ding2023discop} as the main baseline, as it is a secure and efficient steganographic scheme that preserves the underlying generation distribution. Note that Discop does not perform any disambiguation; its extraction accuracy is marked ``--'' in Tables~\ref{tab:main-english}--\ref{tab:main-chinese} because, without disambiguation, tokenization ambiguity causes the receiver to lose synchronization and subsequent extraction becomes unreliable (see Section~\ref{sec:ambiguity-frequency} for the high sample-level ambiguity frequency). Beyond our proposed ReTokSync, we further compare with three token-removal-based disambiguation methods, including Basic \cite{nozaki2022addressing}, MWIS \cite{yan2023securedisambiguating}, and Stepwise Verification \cite{yan2025addressingti,yan2025lowoverhead}, as well as two distribution-preserving disambiguation strategies, including Syncpool \cite{qi2025syncpool}, and Verification-based \cite{yan2024nearimperceptible}. This setup enables a comprehensive evaluation across different classes of ambiguity-handling methods. For completeness, we also evaluate ReTokSync on additional steganographic algorithms; detailed results are provided in the supplementary material.

All experiments are implemented in PyTorch~2.5 with CUDA~11.8 and are run on a machine equipped with an Intel Xeon Gold~6330 CPU and an NVIDIA RTX~4090 GPU.

\subsection{Evaluation Metrics}
\label{sec:metrics}
To evaluate disambiguation strategies in generative steganography, we assess
three dimensions: \emph{effectiveness}, \emph{security}, and \emph{efficiency}.

\paragraph{Effectiveness.}
\textbf{Extraction accuracy} measures whether the receiver can correctly recover
the embedded secret message. We report bit-wise accuracy between the extracted
bitstream and the ground-truth message.

\paragraph{Security.}
\textbf{Kullback--Leibler divergence (KLD)} measures how much the token
distribution under a disambiguation method deviates from the model's original
generation distribution. Lower KLD indicates less perturbation to the original
distribution and therefore stronger distributional security.

\textbf{Average KLD} and \textbf{Maximum KLD} capture such perturbation from the
perspectives of average-case and worst-case distortion, respectively.

\textbf{Perplexity (PPL)} evaluates the fluency and naturalness of the
generated text:
\begin{equation}
\mathrm{PPL}
=
2^{-\frac{1}{N}\sum_{i=1}^{N}\log_2 \Pr(x_i \mid x_{<i}) }.
\end{equation}
A PPL closer to that of the original stego generation indicates stronger
security.

\paragraph{Efficiency.}
\textbf{Capacity} measures the average number of embedded bits per generated
token.

\textbf{Entropy utilization} measures how effectively the available embedding
entropy is converted into payload bits.

\textbf{Time efficiency} reports the computational overhead of embedding and
extraction. We measure the actual wall-clock time of the complete embedding and
extraction process. In addition, we introduce \textbf{Relative Time Overhead
(RTO)} to characterize the time cost of each disambiguation method relative to
its corresponding baseline:
\begin{equation}
\mathrm{RTO} = \frac{T_{\text{method}} - T_{\text{baseline}}}{T_{\text{baseline}}} \times 100\%,
\end{equation}
where $T_{\text{method}}$ and $T_{\text{baseline}}$ denote the total runtime of
the evaluated method and its corresponding baseline, respectively.

\subsection{Ambiguity Frequency}
\label{sec:ambiguity-frequency}

To investigate the frequency of tokenization ambiguity, we conduct experiments
in both Chinese and English settings. Specifically, we use Llama-3.1-8B and
Qwen3-8B to evaluate on the English and Chinese versions of the IMDB dataset,
respectively. For each configuration of generation length and top-$k$, we test
100 samples.

We report two statistics: the proportion of samples in which ambiguity occurs,
and the proportion of tokens that trigger ambiguity. The sample-level ratio
reflects how frequently tokenization ambiguity may arise when the receiver
re-encodes the received text, whereas the token-level ratio measures the
fraction of tokens that actually induce ambiguity among all generated tokens.

Fig.~\ref{fig:frequency1} and Fig.~\ref{fig:frequency2} show that tokenization
ambiguity is a pervasive phenomenon, and its frequency consistently increases
with the generation length. In contrast, the proportion of ambiguity-triggering
tokens remains consistently low. This indicates that, although tokenization
ambiguity occurs frequently at the sample level, its triggering positions are
distributed sparsely in the token sequence. This observation further suggests
that existing strategies that continuously intervene at every generation
position are inherently inefficient, since most positions do not actually lead
to tokenization ambiguity.


Motivated by this observation, ReTokSync adopts a post-processing perspective:
it detects ambiguity in real time and performs correction only when necessary.
Because ambiguity-triggering positions account for only a small fraction of the
sequence, ReTokSync requires only lightweight detection at the vast majority of
non-ambiguous positions, avoiding unnecessary overhead while preserving normal
generation behavior.

\begin{table*}[t]
\centering
\caption{Comparison of disambiguation strategies on Llama-3.1-8B across different top-$k$ settings on the English IMDB dataset.}
\label{tab:main-english}
\begingroup
\setlength{\tabcolsep}{5pt}
\renewcommand{\arraystretch}{1.08}
\resizebox{\textwidth}{!}{%
\begin{tabular}{c|c|c S S S S S S c S}
\toprule
\multicolumn{2}{c|}{Method} & {$k$} &
\multicolumn{1}{c}{Ave PPL} &
\multicolumn{1}{c}{Ave KLD} &
\multicolumn{1}{c}{Max KLD} &
\multicolumn{1}{c}{Capacity} &
\multicolumn{1}{c}{Utilization} &
\multicolumn{1}{c}{Total Time} &
\multicolumn{1}{c}{RTO} &
\multicolumn{1}{c}{Accuracy} \\
\multicolumn{2}{c|}{} & {} &
\multicolumn{1}{c}{} &
\multicolumn{1}{c}{} &
\multicolumn{1}{c}{} &
\multicolumn{1}{c}{(bits/token)} &
\multicolumn{1}{c}{} &
\multicolumn{1}{c}{(s)} &
\multicolumn{1}{c}{(\%)} &
\multicolumn{1}{c}{} \\
\midrule

\multicolumn{2}{c|}{\multirow[c]{3}{*}{Discop~\cite{ding2023discop}}} & 32  & 7.26 & 0.00 & 0.00 & 2.28 & 0.88 & 4.90 & 0.00 & \multicolumn{1}{c}{--} \\
\multicolumn{2}{c|}{} & 128 & 11.33 & 0.00 & 0.00 & 2.95 & 0.91 & 4.98 & 0.00 & \multicolumn{1}{c}{--} \\
\multicolumn{2}{c|}{} & 512 & 17.29 & 0.00 & 0.00 & 3.50 & 0.91 & 5.18 & 0.00 & \multicolumn{1}{c}{--} \\
\midrule

\multirow{9}{*}{\parbox[c]{1.8cm}{\centering Token Removal}} &
\multirow{3}{*}{Basic~\cite{nozaki2022addressing}}
  & 32  & 19.93 & 0.96 & 9.15 & 2.72 & 0.96 & 7.64 & 55.92 & 1.00 \\
& & 128 & 59.05 & 1.24 & 8.77 & 4.05 & 0.98 & 8.75 & 75.70 & 1.00 \\
& & 512 & 219.67 & 1.36 & 7.58 & 5.72 & 0.98 & 21.69 & 318.73 & 1.00 \\
\cmidrule(lr){2-11}

& \multirow{3}{*}{MWIS~\cite{yan2023securedisambiguating}}
  & 32  & 6.87 & 0.09 & 0.75 & 2.12 & 0.81 & 7.64 & 55.92 & 1.00 \\
& & 128 & 9.34 & 0.11 & 0.75 & 2.61 & 0.82 & 9.01 & 80.92 & 1.00 \\
& & 512 & 11.74 & 0.15 & 0.83 & 2.93 & 0.80 & 23.89 & 361.20 & 1.00 \\
\cmidrule(lr){2-11}

& \multirow{3}{*}{\parbox[c]{2.9cm}{\centering Stepwise\\Verification~\cite{yan2025addressingti,yan2025lowoverhead}}}
  & 32  & 7.36 & 0.00 & 0.05 & 2.31 & 0.89 & 6.65 & 35.71 & 1.00 \\
& & 128 & 11.93 & 0.00 & 0.15 & 3.06 & 0.91 & 11.82 & 137.35 & 1.00 \\
& & 512 & 16.37 & 0.00 & 0.01 & 3.45 & 0.91 & 32.39 & 525.29 & 1.00 \\

\midrule
\multirow{6}{*}{\parbox[c]{2.4cm}{\centering Distribution\\Preserving}} &
\multirow{3}{*}{Syncpool~\cite{qi2025syncpool}}
  & 32  & 7.65 & 0.00 & 0.00 & 1.68 & 0.64 & 5.05 & 3.06 & 1.00 \\
& & 128 & 11.98 & 0.00 & 0.00 & 1.01 & 0.33 & 5.15 & 3.41 & 1.00 \\
& & 512 & 17.06 & 0.00 & 0.00 & 0.53 & 0.14 & 5.21 & 0.58 & 1.00 \\
\cmidrule(lr){2-11}

& \multirow{3}{*}{\parbox[c]{2.8cm}{\centering Verification\\-based~\cite{yan2024nearimperceptible}}}
  & 32  & 6.58 & 0.00 & 0.00 & 2.32 & 0.86 & 15.17 & 209.59 & 1.00 \\
& & 128 & 11.00 & 0.00 & 0.00 & 2.82 & 0.86 & 441.21 & 8759.64 & 1.00 \\
& & 256 & 13.54 & 0.00 & 0.00 & 3.13 & 0.87 & 552.64 & 10714.87 & 1.00 \\

\midrule
\multirow{3}{*}{Ours} & \multirow{3}{*}{ReTokSync}
  & 32  & 7.26 & 0.00 & 0.00 & 2.28 & 0.88 & 4.94 & 0.82 & 1.00 \\
& & 128 & 11.36 & 0.00 & 0.00 & 2.95 & 0.91 & 5.05 & 1.41 & 0.999 \\
& & 512 & 17.29 & 0.00 & 0.00 & 3.50 & 0.91 & 5.27 & 1.74 & 1.00 \\

\bottomrule
\end{tabular}%
}
\endgroup
\end{table*}

\begin{table*}[t]
\centering
\caption{Comparison of disambiguation strategies on Qwen3-8B across different top-$k$ settings on the Chinese IMDB dataset.}
\label{tab:main-chinese}
\begingroup
\setlength{\tabcolsep}{5pt}
\renewcommand{\arraystretch}{1.08}
\resizebox{\textwidth}{!}{%
\begin{tabular}{c|c|c S S S S S S c S}
\toprule
\multicolumn{2}{c|}{Method} & {$k$} &
\multicolumn{1}{c}{Ave PPL} &
\multicolumn{1}{c}{Ave KLD} &
\multicolumn{1}{c}{Max KLD} &
\multicolumn{1}{c}{Capacity} &
\multicolumn{1}{c}{Utilization} &
\multicolumn{1}{c}{Total Time} &
\multicolumn{1}{c}{RTO} &
\multicolumn{1}{c}{Accuracy} \\
\multicolumn{2}{c|}{} & {} &
\multicolumn{1}{c}{} &
\multicolumn{1}{c}{} &
\multicolumn{1}{c}{} &
\multicolumn{1}{c}{(bits/token)} &
\multicolumn{1}{c}{} &
\multicolumn{1}{c}{(s)} &
\multicolumn{1}{c}{(\%)} &
\multicolumn{1}{c}{} \\
\midrule

\multicolumn{2}{c|}{\multirow[c]{3}{*}{Discop~\cite{ding2023discop}}} & 32  & 5.58 & 0.00 & 0.00 & 1.85 & 0.87 & 6.85 & 0.00 & \multicolumn{1}{c}{--} \\
\multicolumn{2}{c|}{} & 128 & 8.70 & 0.00 & 0.00 & 2.47 & 0.90 & 7.00 & 0.00 & \multicolumn{1}{c}{--} \\
\multicolumn{2}{c|}{} & 512 & 11.61 & 0.00 & 0.00 & 2.77 & 0.91 & 7.07 & 0.00 & \multicolumn{1}{c}{--} \\
\midrule

\multirow{9}{*}{\parbox[c]{1.8cm}{\centering Token Removal}} &
\multirow{3}{*}{Basic~\cite{nozaki2022addressing}}
  & 32  & 21.10 & 1.64 & 15.32 & 2.22 & 0.96 & 11.92 & 74.01 & 1.00 \\
& & 128 & 56.47 & 2.18 & 15.46 & 3.07 & 0.98 & 23.22 & 231.71 & 1.00 \\
& & 512 & 104.79 & 2.29 & 16.82 & 3.62 & 0.99 & 27.01 & 282.04 & 1.00 \\
\cmidrule(lr){2-11}

& \multirow{3}{*}{MWIS~\cite{yan2023securedisambiguating}}
  & 32  & 5.60 & 0.16 & 1.15 & 1.69 & 0.74 & 10.61 & 54.89 & 1.00 \\
& & 128 & 7.69 & 0.18 & 1.16 & 2.21 & 0.77 & 23.73 & 239.00 & 1.00 \\
& & 512 & 9.93 & 0.19 & 1.16 & 2.53 & 0.78 & 29.28 & 314.14 & 1.00 \\
\cmidrule(lr){2-11}

& \multirow{3}{*}{\parbox[c]{2.9cm}{\centering Stepwise\\Verification~\cite{yan2025addressingti,yan2025lowoverhead}}}
  & 32  & 5.54 & 0.04 & 1.28 & 1.87 & 0.83 & 10.10 & 47.45 & 1.00 \\
& & 128 & 16.76 & 0.04 & 1.48 & 2.34 & 0.85 & 19.19 & 174.14 & 1.00 \\
& & 512 & 21.64 & 0.04 & 1.39 & 2.76 & 0.86 & 55.51 & 685.15 & 1.00 \\

\midrule
\multirow{6}{*}{\parbox[c]{2.4cm}{\centering Distribution\\Preserving}} &
\multirow{3}{*}{Syncpool~\cite{qi2025syncpool}}
  & 32  & 6.43 & 0.00 & 0.00 & 1.60 & 0.69 & 6.93 & 1.17 & 1.00 \\
& & 128 & 8.56 & 0.00 & 0.00 & 1.95 & 0.69 & 8.49 & 21.29 & 1.00 \\
& & 512 & 11.71 & 0.00 & 0.00 & 2.24 & 0.70 & 15.83 & 123.90 & 1.00 \\
\cmidrule(lr){2-11}

& \multirow{3}{*}{\parbox[c]{2.8cm}{\centering Verification\\-based~\cite{yan2024nearimperceptible}}}
  & 32  & 5.78 & 0.00 & 0.00 & 1.61 & 0.83 & 45.51 & 564.38 & 1.00 \\
& & 128 & 9.17 & 0.00 & 0.00 & 2.32 & 0.87 & 400.13 & 5616.14 & 1.00 \\
& & 256 & 10.46 & 0.00 & 0.00 & 2.54 & 0.86 & 741.34 & 10770.09 & 1.00 \\

\midrule
\multirow{3}{*}{Ours} & \multirow{3}{*}{ReTokSync}
  & 32  & 5.53 & 0.00 & 0.00 & 1.85 & 0.87 & 8.16 & 19.12 & 0.998 \\
& & 128 & 8.70 & 0.00 & 0.00 & 2.47 & 0.90 & 7.40 & 5.71 & 0.998 \\
& & 512 & 11.00 & 0.00 & 0.00 & 2.77 & 0.91 & 8.07 & 14.14 & 0.997 \\

\bottomrule
\end{tabular}%
}
\endgroup
\end{table*}


\subsection{Results and Analysis}

We compare two mainstream categories of disambiguation strategies, together with ReTokSync, from three perspectives: \emph{effectiveness}, \emph{security}, and \emph{efficiency}. Results for the English and Chinese settings are reported in Table~\ref{tab:main-english} and Table~\ref{tab:main-chinese}, respectively. Here, \emph{Token Removal} denotes strategies that resolve ambiguity by pruning candidate tokens, whereas \emph{Distribution Preserving} denotes methods that preserve the original generation distribution during disambiguation.

\textbf{Token-removal strategies cannot provide strict security.}
These methods consistently yield non-zero KL divergence, indicating unavoidable distortion of the original generation distribution. In particular, Basic \cite{nozaki2022addressing} causes a clear PPL shift, suggesting degraded text quality, while MWIS \cite{yan2023securedisambiguating} reduces this effect but still introduces non-negligible distributional distortion. Stepwise Verification \cite{yan2025addressingti,yan2025lowoverhead} further lowers KL divergence, but does not eliminate it; moreover, as the candidate-pool size increases, the number of candidate tokens that must be checked grows substantially, leading to higher time overhead.

\textbf{Distribution-preserving strategies still face efficiency trade-offs.}
Syncpool \cite{qi2025syncpool} maintains zero KL divergence through synchronized sampling over ambiguity pools, but does so at the cost of lower embedding capacity and entropy utilization. Verification-based \cite{yan2024nearimperceptible} performs well in security and capacity, but incurs substantially higher decoding time, which increases rapidly with the candidate-pool size.

\textbf{ReTokSync preserves both security and efficiency while ensuring reliable disambiguation.}
ReTokSync remains nearly identical to the baseline in both security and efficiency, while preventing local tokenization mismatches from escalating into complete extraction failure. Compared with the closest competitor Stepwise Verification, ReTokSync achieves strictly zero average and maximum KL divergence (vs.\ non-zero for Stepwise Verification in most settings), higher embedding capacity, and dramatically lower time overhead---for example, at $k{=}512$ on the English dataset, ReTokSync incurs only 1.74\% RTO compared with 525.29\% for Stepwise Verification. Compared with the distribution-preserving Syncpool, ReTokSync achieves substantially higher embedding capacity (e.g., 3.50 vs.\ 0.53 bits/token at $k{=}512$). As a result, ambiguity-induced errors are reduced to sparse and localized residual bit flips, yielding extraction accuracy above 99.7\%. Under the proposed two-channel communication mechanism, these residual errors can be fully corrected with a single additional sample, restoring system-level extraction accuracy to 100\%.

\subsection{Communication-Scenario Evaluation}

We further evaluate the proposed framework in a realistic communication scenario. Using the same backbone models and datasets as above, we employ ReTokSync as the primary communication mechanism and use Syncpool as an auxiliary reliable channel to transmit correction information. For each setting, we generate 500 steganographic samples of length 100 and group every 10 samples for joint correction. In the primary channel, the top-$k$ value is set to 32, 64, 128, 256, and 512, while the correction channel consistently uses Syncpool with top-$k=64$.

\begin{table}[t]
\caption{Grouped-correction performance under different top-$k$ settings.}
\label{tab:comm_perf_10}
\centering
\footnotesize
\setlength{\tabcolsep}{4pt}
\renewcommand{\arraystretch}{1.08}
\begin{tabular}{lcccc}
\toprule
\textbf{Model} & \textbf{Top-$k$} & \makecell[c]{\textbf{Avg. Error} \\ \textbf{(bit/group)}} & \makecell[c]{\textbf{Avg. Corr. Len.} \\ \textbf{(bit)}} & \makecell[c]{\textbf{Success} \\ \textbf{(\%)}} \\
\midrule
Llama-3.1-8B & 32  & 0.78 & 10.28 & 100 \\
Llama-3.1-8B & 64  & 1.40 & 14.82 & 100 \\
Llama-3.1-8B & 128 & 1.82 & 16.44 & 100 \\
Llama-3.1-8B & 256 & 2.06 & 17.98 & 100 \\
Llama-3.1-8B & 512 & 2.96 & 21.30 & 100 \\
\midrule
Qwen3-8B & 32  & 1.22 & 12.84 & 100 \\
Qwen3-8B & 64  & 1.52 & 14.22 & 100 \\
Qwen3-8B & 128 & 2.92 & 21.22 & 100 \\
Qwen3-8B & 256 & 5.22 & 30.16 & 100 \\
Qwen3-8B & 512 & 4.40 & 25.94 & 100 \\
\bottomrule
\end{tabular}
\end{table}

Table~\ref{tab:comm_perf_10} summarizes the group-level correction performance under different top-$k$ settings, including the average number of residual bit errors per group, the average correction-message length, and the final correction success rate. Table~\ref{tab:error-ratio-topk} complements this view by reporting the corresponding bit error ratio, token error ratio, and correction-to-embedding ratio.


Several consistent observations emerge. First, the correction success rate remains at \(100\%\) for all evaluated settings, confirming that the residual errors produced by the primary channel can be fully repaired under the proposed two-channel design. Second, the residual error remains small and the associated correction-message length is correspondingly limited; although the required correction length generally increases with the primary top-$k$, it stays well controlled across all settings. Third, Table~\ref{tab:error-ratio-topk} shows that these residual deviations remain highly sparse even in the more difficult configurations: the bit error ratio stays within 0.033\%--0.083\% for Llama-3.1-8B and 0.064\%--0.199\% for Qwen3-8B, while the correction-to-embedding ratio remains below 1\% in almost all settings. Additional experimental details are provided in the supplementary material.


These results confirm the complementarity of the two channels: ReTokSync keeps the residual deviations sufficiently sparse for the auxiliary Syncpool channel to repair them with a small amount of structured correction information. The primary channel preserves efficient and secure payload transmission, while the auxiliary channel converts sparse residual errors into fully recoverable system-level communication---a favorable operating point for continuous interaction scenarios.

\begin{table}[t]
\centering
\footnotesize
\caption{Residual error sparsity and correction overhead under different top-$k$ settings.}
\label{tab:error-ratio-topk}
\renewcommand{\arraystretch}{1.02}
\setlength{\tabcolsep}{3.0pt}
\begin{tabular}{lcccc}
\toprule
\textbf{Model} & \textbf{Top-$k$} & \makecell[c]{\textbf{Bit Error} \\ \textbf{Ratio}} & \makecell[c]{\textbf{Token Error} \\ \textbf{Ratio}} & \makecell[c]{\textbf{Correction-to-} \\ \textbf{Embedding Ratio}} \\
\midrule
\multirow{5}{*}{\makecell[c]{Llama-3.1-8B~\cite{grattafiori2024llama3herd}\\(English)}} 
& 32  & 0.033\% & 0.030\% & 0.435\% \\
& 64  & 0.052\% & 0.062\% & 0.547\% \\
& 128 & 0.061\% & 0.072\% & 0.551\% \\
& 256 & 0.063\% & 0.080\% & 0.549\% \\
& 512 & 0.083\% & 0.090\% & 0.596\% \\
\midrule
\multirow{5}{*}{\makecell[c]{Qwen3-8B~\cite{yang2025qwen3}\\(Chinese)}} 
& 32  & 0.064\% & 0.050\% & 0.671\% \\
& 64  & 0.070\% & 0.054\% & 0.658\% \\
& 128 & 0.120\% & 0.100\% & 0.875\% \\
& 256 & 0.199\% & 0.140\% & 1.148\% \\
& 512 & 0.155\% & 0.116\% & 0.912\% \\
\bottomrule
\end{tabular}
\end{table}

\subsection{Security Against Steganalysis}

For steganalysis evaluation, we consider three neural detectors with different architectural characteristics: FCN \cite{yang2019faststeganalysis}, BiLSTM-DENSE \cite{yang2020featurepyramid}, and RBILSTMC \cite{niu2019rbilstmc}. Each experiment is conducted on a balanced dataset consisting of 10,000 cover texts generated by Llama-3.1-8B and 10,000 stego texts. The dataset is split into training, validation, and test sets with a ratio of 60\%/20\%/20\%, and all detectors are trained for 20 epochs. Table~\ref{tab:steganalysis} reports the detection accuracies of the Baseline and ReTokSync under the three steganalysis models. Across all three detectors, the accuracies remain close to chance level (50\%), suggesting that ReTokSync preserves the security properties of the underlying steganographic scheme under the present evaluation protocol.

\begin{table}[t]
\centering
\small
\setlength{\tabcolsep}{6pt}
\caption{Steganalysis Accuracy.}
\label{tab:steganalysis}
\begin{tabular}{c c c}
\toprule
Scheme & Steganalysis Model Architecture & Detection Accuracy \\
\midrule
\multirow{3}{*}{Baseline}
& FCN~\cite{yang2019faststeganalysis} & 49.60\% \\
& BiLSTM-DENSE~\cite{yang2020featurepyramid} & 49.70\% \\
& RBILSTMC~\cite{niu2019rbilstmc} & 49.77\% \\
\midrule
\multirow{3}{*}{ReTokSync}
& FCN~\cite{yang2019faststeganalysis} & 49.47\% \\
& BiLSTM-DENSE~\cite{yang2020featurepyramid} & 50.40\% \\
& RBILSTMC~\cite{niu2019rbilstmc} & 49.42\% \\
\bottomrule
\end{tabular}
\end{table}

\section{Conclusion}


We studied tokenization ambiguity in generative linguistic steganography and proposed ReTokSync, a self-synchronizing framework that detects ambiguity online and performs corrective reset only when needed. By treating ambiguity as a locally triggered, inherently sparse phenomenon, ReTokSync prevents local tokenization mismatch from propagating into persistent desynchronization and global extraction failure while preserving zero KL divergence and near-baseline efficiency. Experiments on both English and Chinese settings confirm that ReTokSync remains closest to the baseline across all evaluated metrics, with residual bit error ratios below 0.2\%. Building on this property, we further developed a two-channel communication mechanism that fully repairs these residual errors, achieving 100\% end-to-end recovery. Future work will extend the framework to broader model families, alternative tokenization schemes, and stronger adversarial security evaluations.

\bibliographystyle{ACM-Reference-Format}
\bibliography{references}










\clearpage
\appendix

\section{Additional Details of ReTokSync}
\label{sec:appendix-retoksync}

\subsection{Skip-X Mechanism}
\label{sec:appendix-skipx}

When truncated sampling strategies such as top-$k$ and top-$p$ are used, tokenization ambiguity may cause the re-tokenized sequence to contain tokens that lie outside the support of the original truncated distribution \cite{holtzman2020curious,yan2025addressingti}. We refer
to such tokens as \emph{X-tokens}. More concretely, let
\[
\tilde{x}_{1:\tilde{t}} = \mathrm{Tok}(\mathrm{Detok}(x_{1:t}))
\]
denote the receiver-view token sequence defined in the main paper. For each
receiver-view step $u \in \{1,\dots,\tilde{t}\}$, let
\[
\mathcal{C}_u
\]
denote the truncated candidate set determined by the shared conditional
distribution
\[
p_\theta(\cdot \mid \tilde{x}_{1:u-1}).
\]
If $\tilde{x}_u \notin \mathcal{C}_u$, then $\tilde{x}_u$ is an X-token.

X-tokens are introduced by re-tokenization rather than by the sender's actual
sampling decision at the corresponding step. Therefore, directly applying the
standard extraction rule to such tokens may produce spurious message bits and
incorrect state transitions. To address this issue, we augment the decoding
procedure $\mathrm{Dec}$ with a simple \textbf{Skip-X} rule: whenever the
current token is an X-token, the decoder performs neither message extraction
nor state update, and directly proceeds to the next receiver-view step.

Formally, suppose ambiguity is detected at generation step $t$, and the sender
runs
\[
(\hat{m}_{1:\hat{j}}, \hat{s}) \leftarrow \mathrm{Dec}(\tilde{x}_{1:\tilde{t}}, s_0)
\]
on the current receiver-view sequence. Under Skip-X, if the token at receiver-view
step $u$ satisfies $\tilde{x}_u \notin \mathcal{C}_u$, then the decoder skips
this position and keeps the current extraction status unchanged. That is, no
new message bits are appended, and the internal state is preserved. Since the
receiver applies the same rule during extraction, both parties remain strictly
aligned after corrective reset.

This mechanism is particularly important when a locally ambiguous tokenization
introduces a non-original token. For example, two tokens \texttt{any} and
\texttt{thing} may be merged into a single token \texttt{anything} after
re-tokenization, while \texttt{anything} does not belong to the truncated
candidate set at the corresponding step. The small number of residual bit-flip
errors in the basic corrective-reset procedure originates precisely from such
non-original tokens: if they are decoded as ordinary in-support tokens, the
decoder may extract incorrect message bits. Skip-X explicitly identifies these
out-of-support tokens and bypasses them during extraction, thereby eliminating
this error source and restoring message extraction accuracy to 100\%.

\begin{algorithm}[t]
\caption{Decoding with Skip-X}
\label{alg:dec-skipx}
\begin{algorithmic}[1]
\REQUIRE Receiver-view sequence $\tilde{x}_{1:\tilde{t}}$, initial state $s_0$
\ENSURE Extracted message prefix $\hat{m}_{1:\hat{j}}$, final state $\hat{s}$
\STATE $\hat{m} \gets \emptyset$, $\hat{j} \gets 0$, $\hat{s} \gets s_0$
\FOR{$u = 1$ to $\tilde{t}$}
    \STATE Compute the truncated candidate set $C_u$ from $p_\theta(\cdot \mid \tilde{x}_{1:u-1})$
    \IF{$\tilde{x}_u \notin C_u$}
        \STATE \textbf{continue}
    \ENDIF
    \STATE $(\Delta m, \hat{s}) \gets \textsc{DecStep}(\tilde{x}_u, C_u, \hat{s})$
    \STATE $\hat{m} \gets \hat{m} \Vert \Delta m$
    \STATE $\hat{j} \gets \hat{j} + |\Delta m|$
\ENDFOR
\STATE \textbf{return} $(\hat{m}_{1:\hat{j}}, \hat{s})$
\end{algorithmic}
\end{algorithm}


\subsection{Incomplete Tokens and Buffering Mechanism}
\label{sec:appendix-incomplete-buffer}

Prior work has observed a class of ``not-yet-complete'' token phenomena from different perspectives . Land and Bartolo point out that some tokenizers produce \emph{partial UTF-8 sequences}, which contain only a fragment of the UTF-8 encoding of a character and therefore cannot be independently decoded into normal text \cite{land2024fishing}. They also discuss \emph{unreachable tokens}, i.e., tokens that cannot return to their original token IDs after a decode--re-encode process \cite{land2024fishing}. Building on this line of observation, Yan and Murawaki further show that such tokens often manifest themselves as \emph{temporary inconsistency} in generative steganography and watermarking \cite{yan2025addressingti}. Although the token-level inconsistency rate on 100-token texts is usually below 0.5\%, the temporary inconsistency rates on Swallow-7b and Qwen2.5-7b still reach 81.98\% and 87.93\%, respectively, and most of these inconsistencies naturally disappear within the next two generated tokens \cite{yan2025addressingti,fujii2024continual,qwenteam2025qwen25}. In other words, some tokens correspond only to locally unfinished encoding units at the time they are generated, and become normal text only after being combined with subsequent tokens.

Motivated by this observation, we refer to tokens that cannot be independently decoded into normal text and require subsequent tokens to form a complete textual unit as \emph{incomplete tokens}. In ReTokSync, since re-tokenization and ambiguity detection are performed after each generated token, directly re-tokenizing an unfinished combination of incomplete tokens may produce a mismatch between sequences that does not correspond to genuine tokenization ambiguity, but merely reflects a temporary inconsistency caused by incomplete generation.

Although Yan and Murawaki provide a systematic analysis of temporary inconsistency \cite{yan2025addressingti}, their steganographic Stepwise Verification does not introduce a dedicated treatment for such tokens. Their method is essentially a persistent preventive intervention: at each generation step, candidate tokens are verified one by one, and those judged inconsistent are removed in advance to avoid tokenization inconsistency at the source \cite{yan2025addressingti}. While this design guarantees consistency, it does not explicitly distinguish genuine tokenization ambiguity from temporary inconsistency caused by unfinished token fragments.

To handle incomplete tokens, we design a buffering mechanism within ReTokSync. Specifically, incomplete tokens are first accumulated in a buffer, and ambiguity detection is triggered only when the buffered token sequence becomes decodable as normal text. For incomplete-token combinations that have not yet been fully formed, the system temporarily skips ambiguity detection and related processing, thereby avoiding misclassifying the sequence offset caused by incomplete decoding as genuine ambiguity. Through this buffering mechanism, ambiguity detection avoids both false positives and false negatives, thereby preserving the precision of ReTokSync in handling tokenization ambiguity.

\begin{algorithm}[t]
\caption{Incomplete-Token Buffering in ReTokSync}
\label{alg:incomplete-buffering}
\begin{algorithmic}[1]
\REQUIRE Payload bitstring $m$, tokenizer $\textsc{Tok}/\textsc{Detok}$, stego encoder $\textsc{Enc}$, stego decoder $\textsc{Dec}$, predicates $\textsc{IsIncomplete}(\cdot)$ and $\textsc{Decodable}(\cdot)$, initial state $s_0$
\STATE $x \gets [\,]$, $\tilde{x} \gets [\,]$, $B \gets [\,]$, $j \gets 0$, $s \gets s_0$
\FOR{$t = 1, 2, \ldots$}
    \STATE $(x_t, j, s) \gets \textsc{Enc}(\tilde{x}, m, j, s)$
    \STATE $x \gets x \Vert [x_t]$
    \STATE $\tilde{x}_{\mathrm{pred}} \gets \tilde{x} \Vert [x_t]$
    \IF{$B \neq [\,]$ or $\textsc{IsIncomplete}(x_t)$}
        \STATE $B \gets B \Vert [x_t]$
        \IF{not $\textsc{Decodable}(B)$}
            \STATE $\tilde{x} \gets \tilde{x}_{\mathrm{pred}}$
            \STATE \textbf{continue}
        \ENDIF
        \STATE $B \gets [\,]$
    \ENDIF
    \STATE $\tilde{x}_{\mathrm{retok}} \gets \textsc{Tok}(\textsc{Detok}(x))$
    \IF{$\tilde{x}_{\mathrm{pred}} \neq \tilde{x}_{\mathrm{retok}}$}
        \STATE $(\hat{m}_{1:\hat{j}}, \hat{s}) \gets \textsc{Dec}(\tilde{x}_{\mathrm{retok}}, s_0)$
        \STATE $j \gets \hat{j}$, $s \gets \hat{s}$
    \ENDIF
    \STATE $\tilde{x} \gets \tilde{x}_{\mathrm{retok}}$
\ENDFOR
\end{algorithmic}
\end{algorithm}

\subsection{Handling Pre-segmentation-Induced Temporary Inconsistency}
\label{sec:appendix-preseg}

Before subword encoding, a tokenizer often performs a pre-segmentation
(or pre-tokenization) step that partitions the input according to
hand-crafted rules, such as whitespace, punctuation, or newline-related
markers \cite{schmidt2024tokenization}.
This behavior is also exhibited by the tokenizers used in model families
such as GPT2 \cite{radford2019languagemodels} and Llama3
\cite{grattafiori2024llama3herd}. Once such pre-segmentation boundaries
are fixed, subsequent merges are restricted to operate within, rather than
across, the resulting segments. Consequently, even if two adjacent tokens
are in principle mergeable, they will not be merged when they fall on
different sides of a segment boundary, for example, when one lies at the
end of the preceding segment and the other at the beginning of the next
segment. Therefore, whether two adjacent tokens are merged depends not only
on their local adjacency, but also on whether the relevant
pre-segmentation boundary has already been established.

This phenomenon is particularly important in ReTokSync. Since ReTokSync
performs re-tokenization and ambiguity detection after each generated token,
the input to the tokenizer is often only an unfinished text prefix. As a
result, the resulting pseudo sequence is not always the strict receiver-view
token sequence. If the current prefix happens to fall into a pre-segmentation
region whose boundary has not yet stabilized, the resulting sequence mismatch
does not correspond to genuine tokenization ambiguity, but merely reflects a
temporary inconsistency caused by an unfinished pre-segmentation structure. If
detection and correction are triggered at this stage, normal ambiguity-free
generation positions may be falsely treated as ambiguous, which contradicts
the design goal of ReTokSync to intervene only when genuine ambiguity occurs.

To address this issue, we design a processing scheme based on a special token
family. Concretely, through empirical analysis, we identify a set of special
tokens that act as pre-segmentation anchors in the tokenizers of GPT2 and the
Llama3 series, and organize them into a family $\mathcal{F}$. The treatment is
parallel to the buffering mechanism for incomplete tokens: when the newly
generated token belongs to $\mathcal{F}$, the system temporarily skips
re-tokenization, ambiguity detection, and corrective reset; instead, it
directly appends the token to the current pseudo sequence and continues
subsequent generation with this pseudo sequence as context. Only after the
first token outside $\mathcal{F}$ is generated does the system resume the
standard re-tokenization and disambiguation procedure.

The reason for this design is that future tokens are unknown at the current
step, and thus it is impossible to determine in advance whether the
pre-segmentation boundary has been fully established. This also explains why
existing methods based on preventive intervention cannot specially handle this
issue. Only after the first token outside $\mathcal{F}$ appears does the
relevant pre-segmentation structure become fixed, making it possible to decide
the actual segmentation pattern of the current prefix in the complete text and,
accordingly, whether genuine tokenization ambiguity has occurred. Experiments
show that this mechanism strictly avoids falsely intervening on normal
ambiguity-free positions, thereby preserving the precision of ReTokSync in
pre-segmentation scenarios.

\begin{algorithm}[t]
\caption{Deferring Detection under Pre-segmentation Anchors}
\label{alg:preseg-deferral}
\begin{algorithmic}[1]
\REQUIRE Family $F$, payload bitstring $m$, tokenizer $\textsc{Tok}/\textsc{Detok}$, stego encoder $\textsc{Enc}$, stego decoder $\textsc{Dec}$, initial state $s_0$
\STATE $x \gets [\,]$, $\tilde{x} \gets [\,]$, $j \gets 0$, $s \gets s_0$
\FOR{$t = 1, 2, \ldots$}
    \STATE $(x_t, j, s) \gets \textsc{Enc}(\tilde{x}, m, j, s)$
    \STATE $x \gets x \Vert [x_t]$
    \IF{$x_t \in F$}
        \STATE $\tilde{x} \gets \tilde{x} \Vert [x_t]$
        \STATE \textbf{continue}
    \ENDIF
    \STATE $\tilde{x}_{\mathrm{pred}} \gets \tilde{x} \Vert [x_t]$
    \STATE $\tilde{x}_{\mathrm{retok}} \gets \textsc{Tok}(\textsc{Detok}(x))$
    \IF{$\tilde{x}_{\mathrm{pred}} \neq \tilde{x}_{\mathrm{retok}}$}
        \STATE $(\hat{m}_{1:\hat{j}}, \hat{s}) \gets \textsc{Dec}(\tilde{x}_{\mathrm{retok}}, s_0)$
        \STATE $j \gets \hat{j}$, $s \gets \hat{s}$
    \ENDIF
    \STATE $\tilde{x} \gets \tilde{x}_{\mathrm{retok}}$
\ENDFOR
\end{algorithmic}
\end{algorithm}

\subsection{Correctness Analysis of Message Extraction in ReTokSync}

Although ReTokSync allows tokenization ambiguity to arise at local generation positions, this does not imply that the corresponding message extraction at such positions necessarily becomes entirely incorrect. In fact, the impact of local ambiguity on message extraction depends on the specific case and can be categorized as follows.

First, if the ambiguous token introduced at a given position does not appear in the corresponding candidate probability distribution, the \texttt{skipx} criterion is triggered, and the receiver performs no message extraction at that position. In this case, no additional erroneous bits are introduced.

Second, if the message extracted at that position happens to form a prefix of the message embedded before synchronization reset, then, although a local deviation appears to occur, it does not introduce erroneous bits and therefore does not cause substantive damage to the current message recovery result.

Finally, even when the extracted bits at that position are inconsistent with the original message, the resulting effect is typically limited to bit flips at a small number of positions in the original bit stream, rather than causing the entire extraction process to fail immediately. This further indicates that the primary harm of tokenization ambiguity does not lie in the appearance of one or a few incorrect bits at a local position. Rather, the fundamental problem is that an erroneous extraction process disrupts the subsequent message-state synchronization between the communicating parties, causing subsequent probability reconstruction and message extraction to deviate continuously from the true generation trajectory, and ultimately leading to complete communication failure.

\section{Compatibility Across Steganographic Algorithms}
\label{sec:appendix-compatibility}

\subsection{Generative Linguistic Steganography}
\label{sec:appendix-gls}

Generative linguistic steganography (GLS) has emerged as an important paradigm for covert communication. Unlike traditional text steganography, which typically relies on rule-based substitution or post hoc modification of an existing carrier, GLS embeds secret information directly into the generation process. By leveraging large language models to represent natural language distributions, GLS can produce stegotext that more closely resembles naturally generated text in terms of fluency, semantic coherence, and statistical properties.

Modern generative language models are typically built on autoregressive Transformer-based architectures \cite{vaswani2017attention,radford2019languagemodels}.Given a prefix context $x_{<t} = (x_1, x_2, \dots, x_{t-1})$, the model defines at step $t$ a conditional distribution over the vocabulary $\mathcal{V}$, denoted by $p_\theta(x_t \mid x_{<t})$. Text generation then proceeds by repeatedly selecting or sampling the next token from this distribution and appending it to the current prefix. In practice, inference may use deterministic decoding, such as greedy decoding, or stochastic decoding with truncation strategies such as top-$k$ or top-$p$ sampling. Because open-ended generation typically requires a balance between textual quality and diversity, practical decoding is often stochastic, which naturally provides an entropy source for information embedding.

GLS exploits this stepwise conditional distribution to map secret messages to token selections during generation, thereby enabling reversible embedding and extraction while preserving the underlying generation distribution as much as possible.

Representative GLS schemes include the following:

\begin{itemize}
    \item \textbf{Discop}: Ding et al.\ embed secret information by constructing multiple distribution copies and further improve embedding efficiency through a Huffman-tree-based design \cite{ding2023discop}.
    
    \item \textbf{Meteor}: Kaptchuk et al.\ propose a scheme based on interval reversibility, where token sampling proceeds in a manner analogous to arithmetic coding \cite{kaptchuk2021meteor}.
    
    \item \textbf{AC-based Steganography}: Ziegler et al.\ introduce arithmetic coding into neural text generation and determine each token progressively according to the model's conditional distribution \cite{ziegler2019neural}.
    
    \item \textbf{ADG}: Zhang et al.\ propose Adaptive Dynamic Grouping, which recursively partitions the candidate set according to token probabilities for message embedding \cite{zhang2021provably}.
    
    \item \textbf{Shimmer}: Bai et al.\ develop an entropy-collecting mechanism that captures and reuses residual entropy during embedding to improve payload capacity \cite{bai2025shimmer}.
    
    \item \textbf{Framework-based Scheme}: Liao et al.\ abstract provably secure generative steganography into three components: Probability Recombination, Bin Sampling, and Uniform Steganography \cite{liao2025framework}.
\end{itemize}

\subsection{Algorithm-Specific Adaptation of ReTokSync}
\label{sec:appendix-alg-adaptation}

ReTokSync is broadly compatible with existing generative steganographic schemes. Its deployment nevertheless requires modest algorithm-specific adaptations to accommodate the internal states maintained by different methods. These adaptations are imple\-men\-ta\-tion-level rather than conceptual: they do not change the core embedding logic, the underlying sampling mechanism, or the intended effectiveness of the original schemes. Instead, they expose and synchronize the states necessary for ReTokSync to support online ambiguity detection and corrective reset while preserving the original design of each method.

\textbf{Discop, Meteor, and ADG require only lightweight state synchronization.} In these schemes, pseudorandomness is used to support distribution-copy selection, XOR masking, or synchronized sampling, respectively \cite{ding2023discop,kaptchuk2021meteor,zhang2021provably}. Moreover, previously embedded bits do not recursively alter how subsequent bits are embedded beyond the maintained message position and pseudorandom state. Consequently, when ReTokSync performs corrective reset, it is sufficient to realign the sender's message pointer and pseudorandom state with the decoding result under the receiver-view tokenization. This lightweight reset is enough to restore synchronization without modifying the original embedding procedure, making these methods naturally compatible with ReTokSync.

\textbf{The Framework-based Scheme requires synchronization of multiple coordinated random states.} This scheme combines token binning with uniform steganography and therefore maintains more than one source of internal randomness \cite{liao2025framework}. To integrate it with ReTokSync, corrective reset must recover not only the message position, but also the pseudorandom states associated with bin selection and with the in-bin embedding process. In addition, a special case arises when ambiguity introduces a non-original token that still lies within the truncated candidate set but falls outside the currently selected bin. In such cases, the Skip-X rule must also be invoked so that extraction skips the incompatible position and rolls back the relevant state consistently with the receiver's view.

\textbf{AC-based steganography additionally requires restoration of the arithmetic-coding interval.} In arithmetic-coding-based schemes, the decoder state is determined not only by the current message position but also by the evolving coding interval \cite{ziegler2019neural}. Therefore, corrective reset under ReTokSync must synchronize both quantities to the receiver-view decoding result. Without interval restoration, subsequent decoding would proceed from an inconsistent internal state even if the message pointer itself were correctly aligned.

\textbf{Shimmer requires synchronization of additional iterative intermediate states.} Beyond the message position and pseudorandom state, Shimmer maintains internal variables associated with its iterative interval-update process \cite{bai2025shimmer}. To ensure that corrective reset returns exactly to the embedding trajectory implied by the receiver-view tokenization, ReTokSync must record and restore these intermediate states as well. Importantly, this adaptation does not modify Shimmer's protocol logic or its original security argument; it only extends the set of internal variables that must be synchronized after ambiguity is detected.

\section{Security Analysis}
\label{sec:security}

\paragraph{Threat model and security goal.}
We consider the standard black-box observation model for generative
linguistic steganography. The adversary observes only the visible text
transmitted over the public channel. The adversary does not observe the
sender's internal token sequence $x$, the re-tokenized receiver-view
sequence $\tilde{x}$, the ambiguity flag, the message pointer $j$, the
internal state $s$, or any other hidden computation performed by the
sender. In particular, the public channel reveals only the visible text
prefix
\begin{equation}
y_t = Detok(x_{1:t}).
\label{eq:visible-prefix}
\end{equation}

Our goal is to show that ReTokSync preserves the security of the
underlying distribution-preserving steganographic encoder. Specifically,
we prove that, under the above threat model, ReTokSync induces exactly
the same distribution on visible texts as the natural generation channel,
and is therefore strictly imperceptible on the visible-text channel.

\paragraph{Natural visible-text channel.}
Let $\mathcal{V}$ be the model vocabulary, and let
$p_\theta(\cdot \mid c)$ denote the model's next-token distribution under
token context $c$. For any visible text prefix $y$, define the one-step
visible update map
\begin{equation}
\Phi(y,v) := Detok(Tok(y)\,\|\, [v]),
\label{eq:phi}
\end{equation}
where $v \in \mathcal{V}$ and $\|$ denotes concatenation.

The natural generation channel on visible texts is therefore
\begin{equation}
Q_{\mathrm{nat}}(y' \mid y)
=
\sum_{v:\,\Phi(y,v)=y'} p_\theta(v \mid Tok(y)).
\label{eq:qnat}
\end{equation}

\paragraph{Assumption on the underlying steganographic encoder.}
Let $Enc$ denote the underlying steganographic encoder used inside
ReTokSync. We assume that it is distribution preserving in the usual
sense: for every token context $c$, payload $m$, message pointer $j$,
and internal state $s$, the next token output by $Enc$ satisfies
\begin{equation}
\Pr[Enc(c,m,j,s)=v] = p_\theta(v \mid c),
\qquad \forall v \in \mathcal{V}.
\label{eq:enc-dist-preserve}
\end{equation}
That is, the encoder may use the secret payload to choose among tokens,
but its marginal output distribution remains identical to the model
distribution.

\begin{lemma}[Receiver-view context invariant]
\label{lem:context-invariant}
At the beginning of every generation step $t$ in ReTokSync, the sender's
maintained context $\tilde{x}_{t-1}$ is exactly the receiver-view
tokenization of the current visible prefix:
\begin{equation}
\tilde{x}_{t-1} = Tok(y_{t-1}),
\qquad
y_{t-1}=Detok(x_{1:t-1}).
\label{eq:context-invariant}
\end{equation}
\end{lemma}

\begin{proof}
We prove by induction on $t$.

For $t=1$, the initialization gives $x=\emptyset$ and
$\tilde{x}=\emptyset$, hence
\begin{equation}
\tilde{x}_0 = Tok(\emptyset)=\emptyset.
\label{eq:base-case}
\end{equation}

Assume that Eq.~\eqref{eq:context-invariant} holds at the beginning of
step $t$. After the sender generates a token $x_t$, ReTokSync computes
\begin{equation}
\tilde{x}_{\mathrm{retok}}=Tok(Detok(x_{1:t})).
\label{eq:retok-seq}
\end{equation}
Regardless of whether ambiguity is detected, the algorithm finally sets
\begin{equation}
\tilde{x}_t \leftarrow \tilde{x}_{\mathrm{retok}}.
\label{eq:update-xtilde}
\end{equation}
Therefore,
\begin{equation}
\tilde{x}_t = Tok(Detok(x_{1:t})) = Tok(y_t).
\label{eq:induction-step}
\end{equation}
Hence Eq.~\eqref{eq:context-invariant} also holds at the beginning of
step $t+1$.
\end{proof}

\begin{lemma}[One-step visible distribution preservation]
\label{lem:one-step}
Conditioned on any current visible prefix $y$, the next visible text
produced by ReTokSync follows exactly the natural visible-text channel:
\begin{equation}
\Pr[Y_t = y' \mid Y_{t-1}=y]
=
Q_{\mathrm{nat}}(y' \mid y).
\label{eq:one-step}
\end{equation}
\end{lemma}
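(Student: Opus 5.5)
The plan is to condition on the entire hidden sender state at the start of step $t$ and then average it out. Fix a realization of the history up to step $t-1$: the true tokens $x_{1:t-1}$, the maintained context $\tilde{x}_{t-1}$, the message pointer $j$, and the internal state $s$, and assume this history yields visible prefix $Y_{t-1}=y$. Lemma~\ref{lem:context-invariant} gives $\tilde{x}_{t-1}=Tok(y)$. The context passed to $Enc$ at step $t$ is therefore a deterministic function of $y$ alone, no matter which path (resets or not) produced $y$. By assumption~\eqref{eq:enc-dist-preserve}, for this fixed $(m,j,s)$,
\[
\Pr[x_t=v \mid \text{history}] = p_\theta(v \mid Tok(y)),\qquad \forall v\in\mathcal{V},
\]
and the dependence on $j$, $s$, and $m$ disappears.

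Next I relate the new visible text to $x_t$. The visible text after step $t$ is $Y_t=Detok(x_{1:t})$. I need
\[
Detok(x_{1:t}) = Detok(Tok(y)\,\|\,[x_t]) = \Phi(y,x_t).
\]
I will argue this as follows. Detokenization is concatenation of the surface strings of the tokens. Hence $Detok(x_{1:t})=Detok(x_{1:t-1})\cdot\mathrm{str}(x_t)=y\cdot \mathrm{str}(x_t)$. Likewise $Detok(Tok(y)\|[x_t])=Detok(Tok(y))\cdot\mathrm{str}(x_t)=y\cdot\mathrm{str}(x_t)$, using the round-trip property $Detok(Tok(y))=y$. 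This gives $Y_t=\Phi(y,x_t)$ deterministically given the history. Consequently
\[
\Pr[Y_t=y'\mid \text{history}] = \sum_{v:\Phi(y,v)=y'} p_\theta(v\mid Tok(y)) = Q_{\mathrm{nat}}(y'\mid y).
\]

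Finally, the right-hand side depends on the history only through $y$. Averaging over all histories consistent with $Y_{t-1}=y$ (law of total probability) therefore yields~\eqref{eq:one-step}. The corrective reset, buffering, and skip steps modify only $j$, $s$, and $\tilde{x}$ after $x_t$ is drawn. They thus affect future encoder inputs but not the law of the current token, and the argument above covers them since it holds for arbitrary $(j,s)$.

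The main obstacle is the string-concatenation step, i.e.\ justifying that $Detok$ is a homomorphism and that $Detok\circ Tok$ is the identity on visible prefixes. This can fail for byte-level tokens that form partial UTF-8 sequences or under decoding normalizations. I would first state these as standing tokenizer assumptions, working at the byte level so that $Detok$ is exact concatenation. I would then note that the incomplete-token buffering of Algorithm~\ref{alg:incomplete-buffering} sets $\tilde{x}$ to the predicted sequence rather than $Tok(y)$. That case needs either a refinement of Lemma~\ref{lem:context-invariant} or a restriction of the lemma to the basic Algorithm~\ref{alg:retoksync}. I would try the restriction first.
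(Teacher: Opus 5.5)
Your proposal is correct and follows the same route as the paper: Lemma~\ref{lem:context-invariant} pins the encoder's context to $Tok(y)$, the distribution-preservation assumption gives $X_t \sim p_\theta(\cdot \mid Tok(y))$ regardless of $(j,s)$, and pushing $X_t$ through $\Phi$ yields $Q_{\mathrm{nat}}$. The paper leaves several steps implicit: it asserts $Y_t=\Phi(y,X_t)$ without justification, it averages over hidden states silently, and its invariant lemma only covers Algorithm~\ref{alg:retoksync}. Your explicit tokenizer assumptions ($Detok$ acting as concatenation, $Detok(Tok(y))=y$), your conditioning-then-marginalizing step, and your restriction to Algorithm~\ref{alg:retoksync} make those assumptions visible rather than adding anything new.
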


\begin{proof}
Fix any visible prefix $y$. By Lemma~\ref{lem:context-invariant}, at the
beginning of the current step the sender uses
\begin{equation}
\tilde{x}=Tok(y)
\label{eq:tok-y}
\end{equation}
as its token context. By the distribution-preserving property in
Eq.~\eqref{eq:enc-dist-preserve}, the generated token $X_t$ satisfies
\begin{equation}
\Pr[X_t=v \mid Y_{t-1}=y] = p_\theta(v \mid Tok(y)).
\label{eq:xt-dist}
\end{equation}

The visible text emitted after generating $X_t$ is
\begin{equation}
Y_t = \Phi(y,X_t)=Detok(Tok(y)\,\|\, [X_t]).
\label{eq:yt-def}
\end{equation}
Therefore, for any visible output $y'$,
\begin{equation}
\begin{aligned}
\Pr[Y_t=y' \mid Y_{t-1}=y]
&=
\sum_{v:\,\Phi(y,v)=y'}
\Pr[X_t=v \mid Y_{t-1}=y] \\
&=
\sum_{v:\,\Phi(y,v)=y'}
p_\theta(v \mid Tok(y)) \\
&=
Q_{\mathrm{nat}}(y' \mid y).
\end{aligned}
\label{eq:one-step-derivation}
\end{equation}
This proves the claim.
\end{proof}

\begin{theorem}[Strict security of ReTokSync on the main channel]
\label{thm:main-security}
Under the threat model above, if the underlying steganographic encoder
is distribution preserving, then for any generation horizon $T$,
ReTokSync induces exactly the same distribution on visible-text
transcripts as the natural generation channel. Namely, for every visible
transcript $y_{1:T}$,
\begin{equation}
\Pr[\mathsf{ReTokSync}=y_{1:T}]
=
\Pr[\mathsf{Natural}=y_{1:T}].
\label{eq:transcript-eq}
\end{equation}
Consequently, for any adversary $\mathcal{A}$ that observes only the
visible text transcript,
\begin{equation}
\mathrm{Adv}^{\mathrm{sec}}_{\mathsf{ReTokSync}}(\mathcal{A}) = 0.
\label{eq:adv-zero}
\end{equation}
\end{theorem}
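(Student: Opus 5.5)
The plan is to show that both processes are Markov chains on visible prefixes with the same initial state and the same transition kernel $Q_{\mathrm{nat}}$. Equality of the finite-dimensional laws then follows by the chain rule, and the advantage bound follows because identical transcript distributions cannot be distinguished.

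\textbf{Natural process.} The natural generation channel starts from the empty prefix $y_0=\emptyset$. At each step it samples $v\sim p_\theta(\cdot\mid Tok(y_{t-1}))$ and emits $y_t=\Phi(y_{t-1},v)$, so by construction its one-step law is $Q_{\mathrm{nat}}(y_t\mid y_{t-1})$. Since the context is $Tok(y_{t-1})$, the transition depends only on the current visible prefix. Hence
\[
\Pr[\mathsf{Natural}=y_{1:T}]=\prod_{t=1}^{T}Q_{\mathrm{nat}}(y_t\mid y_{t-1}).
\]

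\textbf{ReTokSync process.} I expand $\Pr[\mathsf{ReTokSync}=y_{1:T}]=\prod_t \Pr[Y_t=y_t\mid Y_{1:t-1}=y_{1:t-1}]$ and show each factor equals $Q_{\mathrm{nat}}(y_t\mid y_{t-1})$.

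\textbf{Main obstacle.} Lemma~\ref{lem:one-step} conditions only on $Y_{t-1}$. The hidden state $(x_{1:t-1},j,s)$, including any resets and their dependence on past ambiguity, could in principle correlate with the next token beyond what $y_{t-1}$ reveals. To handle this, I will strengthen the conditioning. I condition on the full hidden history $H_{t-1}=(x_{1:t-1},j,s,\text{payload})$, which determines $y_{1:t-1}$. By Lemma~\ref{lem:context-invariant}, the context used at step $t$ is $\tilde{x}_{t-1}=Tok(y_{t-1})$, a function of $y_{t-1}$ alone. By Eq.~\eqref{eq:enc-dist-preserve}, applied with fixed $(c,m,j,s)$ and taking the encoder's randomness fresh per step (its pseudorandom state indistinguishable from uniform), the next token has law $p_\theta(\cdot\mid Tok(y_{t-1}))$ for every realization of $H_{t-1}$.

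\textbf{Emitted text.} I still need that the emitted text is $\Phi(y_{t-1},X_t)$. This holds because $Detok(x_{1:t})=Detok(x_{1:t-1})\,\|\,Detok(x_t)$ (detokenization is concatenative, taking care of incomplete-token bytes). The same string results from $Detok(Tok(y_{t-1})\|[X_t])$, as used in Eq.~\eqref{eq:yt-def}. So the conditional law of $Y_t$ given $H_{t-1}$ is $Q_{\mathrm{nat}}(\cdot\mid y_{t-1})$. Averaging over $H_{t-1}$ consistent with $y_{1:t-1}$ via the tower property gives $\Pr[Y_t=y_t\mid Y_{1:t-1}=y_{1:t-1}]=Q_{\mathrm{nat}}(y_t\mid y_{t-1})$. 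This is exactly the Markov refinement of Lemma~\ref{lem:one-step}.

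\textbf{Conclusion.} Multiplying the factors gives Eq.~\eqref{eq:transcript-eq}. For the advantage, any adversary $\mathcal{A}$ is a (possibly randomized) function of $y_{1:T}$, so $|\Pr[\mathcal{A}(\mathsf{ReTokSync})=1]-\Pr[\mathcal{A}(\mathsf{Natural})=1]|$ is at most the total variation distance between identical distributions, which is $0$. This yields Eq.~\eqref{eq:adv-zero}. The variants (Skip-X, buffering, the $\mathcal{F}$ deferral) modify only $(j,s)$ or temporarily use a pseudo-context. For the deferral and buffering cases, I would remark that the theorem as stated concerns Algorithm~\ref{alg:retoksync}, where Lemma~\ref{lem:context-invariant} applies verbatim.
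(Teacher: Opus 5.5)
Your proposal is correct and follows essentially the same route as the paper: the chain rule over visible prefixes, with each factor reduced to $Q_{\mathrm{nat}}(y_t\mid y_{t-1})$ through Lemma~\ref{lem:context-invariant} and the distribution-preserving assumption~\eqref{eq:enc-dist-preserve}. Conditioning on the full hidden history with the tower property, and deriving $Y_t=\Phi(y_{t-1},X_t)$ from concatenative detokenization together with $Detok(Tok(y))=y$, make rigorous two steps the paper only asserts (the ``autoregressive'' reduction to $y_{t-1}$ and Eq.~\eqref{eq:yt-def}); note, however, that per-step freshness of the encoder's coins must be assumed as part of that idealized assumption, as the paper implicitly does, and cannot be justified by PRNG pseudorandomness, because the corrective reset can rewind the PRNG state and replay a draw already used to select an emitted token.
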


\begin{proof}
By the chain rule of probability, for any visible transcript $y_{1:T}$,
\begin{equation}
\Pr[\mathsf{ReTokSync}=y_{1:T}]
=
\prod_{t=1}^{T}
\Pr[Y_t=y_t \mid Y_{1:t-1}=y_{1:t-1}].
\label{eq:chain-rule}
\end{equation}
Because the channel is autoregressive, conditioning on the whole visible
history is equivalent to conditioning on the current visible prefix
$y_{t-1}$. By Lemma~\ref{lem:one-step},
\begin{equation}
\Pr[Y_t=y_t \mid Y_{1:t-1}=y_{1:t-1}]
=
Q_{\mathrm{nat}}(y_t \mid y_{t-1}).
\label{eq:step-qnat}
\end{equation}
Substituting Eq.~\eqref{eq:step-qnat} into
Eq.~\eqref{eq:chain-rule} yields
\begin{equation}
\Pr[\mathsf{ReTokSync}=y_{1:T}]
=
\prod_{t=1}^{T} Q_{\mathrm{nat}}(y_t \mid y_{t-1})
=
\Pr[\mathsf{Natural}=y_{1:T}].
\label{eq:transcript-product}
\end{equation}
Hence the two channels are identically distributed on all observable
transcripts, which proves Eq.~\eqref{eq:adv-zero}.
\end{proof}

\paragraph{Why corrective reset does not break security.}
The key reason is that corrective reset changes only the sender's hidden
internal state, not the public transcript. More precisely:
\begin{enumerate}
    \item the visible token/text for the current step has already been
    sampled and emitted before reset is performed;
    \item reset only updates $(j,s)$ so that subsequent embedding
    remains aligned with the receiver-view tokenization;
    \item by Lemma~\ref{lem:context-invariant}, after reset the sender
    again conditions future generation on $Tok(y_t)$, which is exactly
    the context used by the natural channel after observing the visible
    prefix $y_t$.
\end{enumerate}
Therefore, corrective reset affects only hidden synchronization
variables, while the observable visible-text distribution remains
unchanged.

\begin{corollary}[Zero KL divergence on the visible-text channel]
\label{cor:kl-zero}
Under the assumptions of Theorem~\ref{thm:main-security}, the KL
divergence between the visible-text distribution induced by ReTokSync
and that induced by the natural channel is exactly zero:
\begin{equation}
D_{\mathrm{KL}}
\bigl(
\mathsf{ReTokSync}
\;\|\;
\mathsf{Natural}
\bigr)
=0.
\label{eq:kl-zero}
\end{equation}
\end{corollary}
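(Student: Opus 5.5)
The plan is to derive the corollary directly from Theorem~\ref{thm:main-security}. Fix a generation horizon $T$. Let $P_T$ and $Q_T$ denote the distributions on visible transcripts $y_{1:T}$ induced by ReTokSync and by the natural channel, respectively. The theorem gives $P_T(y_{1:T}) = Q_T(y_{1:T})$ for every transcript, so $P_T$ and $Q_T$ are the same probability measure on the same countable outcome space. There are only finitely many token sequences of length $T$, so this space is finite.

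Next I substitute into the definition
\[
D_{\mathrm{KL}}(P_T\,\|\,Q_T)=\sum_{y_{1:T}:\,P_T(y_{1:T})>0} P_T(y_{1:T})\log\frac{P_T(y_{1:T})}{Q_T(y_{1:T})}.
\]
This requires two checks.
\begin{itemize}
\item \emph{Absolute continuity.} Whenever $P_T(y_{1:T})>0$, we also have $Q_T(y_{1:T})=P_T(y_{1:T})>0$. So no term involves a zero denominator, and the divergence is finite.
\item \emph{Vanishing terms.} Every summand equals $P_T(y_{1:T})\log 1=0$. Hence the divergence is exactly zero.
\end{itemize}
Since $T$ was arbitrary, the statement holds for every horizon.

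A remark covers the per-step (token-level) KLD that the experiments report. Lemma~\ref{lem:one-step} gives equality of the one-step conditional distributions $\Pr[Y_t=\cdot\mid Y_{t-1}=y]=Q_{\mathrm{nat}}(\cdot\mid y)$ for every prefix $y$. By the same argument, each stepwise divergence is zero. Alternatively, the chain rule for KL divergence recovers the transcript-level statement as a sum of zero terms.

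There is no substantive obstacle. The only point needing care is to interpret "visible-text distribution" as the finite-horizon transcript distribution of Theorem~\ref{thm:main-security}, or its family over all $T$. This makes the KL divergence well defined on a finite space, with no measure-theoretic subtleties about infinite sequences. The result is then an immediate consequence of equality of measures.
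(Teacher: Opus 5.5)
Your proposal is correct and takes essentially the same approach as the paper. The paper's proof simply says the corollary follows immediately from the transcript equality in Theorem~\ref{thm:main-security}; you spell out the absolute-continuity and vanishing-summand checks it leaves implicit, and your per-step remark via Lemma~\ref{lem:one-step} is a harmless addition.
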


\begin{proof}
The result follows immediately from
Eq.~\eqref{eq:transcript-eq}.
\end{proof}

\paragraph{Remark on side channels.}
The above theorem is proved under the text-only observation model. It
does not cover extra observables outside the visible transcript, such as
timing, computation latency, packetization artifacts, or a correction
schedule that depends on the secret payload or on observed errors.
These factors are outside the present security definition and must be
fixed or separately masked in deployment.

\begin{corollary}[Compositional security of the two-channel mechanism]
\label{cor:composition}
Consider the continuous-interaction mechanism in which ReTokSync is used
as the primary channel and Syncpool is used as the auxiliary correction
channel. If
\begin{enumerate}
    \item the primary ReTokSync channel satisfies
    Theorem~\ref{thm:main-security};
    \item the auxiliary channel is itself strictly secure; and
    \item the positions at which auxiliary correction samples are
    inserted are determined by a public, payload-independent rule, e.g.,
    one correction sample after every fixed-size group,
\end{enumerate}
then the composed two-channel communication mechanism is strictly secure
on the visible-text channel.
\end{corollary}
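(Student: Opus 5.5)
The plan is to prove Corollary~\ref{cor:composition} by a hybrid argument over the interleaved transcript. Fix a horizon consisting of $G$ groups, each made of $n$ primary samples followed by one auxiliary sample. By condition~3, the schedule of which sample slots are primary and which are auxiliary is a fixed public sequence, independent of the payload and of any observed errors. The adversary's view is therefore the concatenation $(Y^{(1)},\dots,Y^{(G(n+1))})$ of visible-text samples in a known slot order. The goal is to show that this joint distribution equals the distribution obtained when every slot is filled by natural generation under the same contexts.

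The first step is to write the joint distribution by the chain rule over slots. It suffices to show that, conditioned on any realized visible history of all previous slots, the current slot's visible text is distributed as the natural channel given that history. For a primary slot, I would invoke Theorem~\ref{thm:main-security}. Its proof, through Lemma~\ref{lem:context-invariant} and Lemma~\ref{lem:one-step}, only requires that $Enc$ be distribution preserving for every context, payload, pointer $j$ and state $s$. Because of this, the conditional law holds for whatever the message and internal state happen to be at the start of the slot. In particular, it holds even though these depend on earlier slots, including the payload and the residual errors.

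For an auxiliary slot, I would use condition~2 in the same conditional form. Syncpool's output law must equal the natural law for any embedded content, here the correction message. The correction message is a function of the payload and of the primary-channel extraction errors, so this conditional, for-every-content form of security is exactly what is needed.

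Multiplying the per-slot conditionals then gives equality with the all-natural transcript distribution. From this, zero adversarial advantage follows as in Theorem~\ref{thm:main-security}, and zero KL divergence follows as in Corollary~\ref{cor:kl-zero}.

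The main obstacle is the dependence structure. The content fed to the auxiliary channel depends on the randomness realized in the primary channel, and the hidden states carry across samples. A naive argument that treats the channels as independent is therefore invalid. I would handle this by stating the security of each component as an identity of conditional distributions that holds pointwise in all hidden inputs, namely the message, pointer, state and correction content. Each conditional factor is then the natural one regardless of how those hidden inputs were correlated with the past visible text, and the chain rule closes the argument.

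Condition~3 is essential at this point. If the slot type depended on errors, the slot-type sequence itself would be a non-natural observable. I would note this explicitly, consistent with the side-channel remark.
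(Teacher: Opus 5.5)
Your proposal is correct and follows the same route as the paper. The paper's proof combines the payload-independent interleaving from condition~3 with the per-channel security from conditions~1 and~2, then appeals to ``standard composition.'' Your chain rule over slots, with each component's security read as a conditional identity holding pointwise in all hidden inputs (message, pointer, state, correction content), is what actually makes that composition step valid. It matters because the correction message depends on the primary channel's realized errors, a dependence the paper's three-sentence sketch passes over.
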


\begin{proof}
Under condition (3), the interleaving pattern of primary and auxiliary
samples is fixed independently of the secret message, so it reveals no
additional information. Under conditions (1) and (2), each individual
channel is indistinguishable from its corresponding natural channel.
Therefore, by standard composition, the joint transcript distribution is
also indistinguishable from the corresponding natural mixed transcript.
\end{proof}

\section{Details of the Covert Communication Mechanism}
\label{app:covert-communication-details}

\subsection{Protocol Scope}
\label{app:protocol-scope}

The main text presents the high-level motivation and workflow of the two-channel communication mechanism. Here we focus on the protocol details needed for implementation, namely how residual errors are aggregated on the sender side, how the corresponding correction message is constructed, and how the receiver parses and applies that message to recover the final payload.

We use ReTokSync as the primary channel and Syncpool as the auxiliary correction channel \cite{qi2025syncpool}. The rationale is that ReTokSync confines ambiguity-induced damage to sparse local residual errors, while Syncpool provides a reliable vehicle for transmitting the comparatively short correction message. As a result, the primary channel carries the main payload with high efficiency, and the auxiliary channel is used only to repair the small number of residual extraction errors that remain after primary-channel decoding.

\subsection{Sender-Side Construction}
\label{app:sender-side-construction}

We adopt a grouped correction strategy. After transmitting \(n\) steganographic samples through the primary channel, the sender generates one additional correction sample for that group through the auxiliary channel. This correction sample summarizes all residual errors detected within the group.

For a given group, the sender first converts each transmitted sample to its receiver-view token sequence and concatenates these sequences in transmission order, thereby forming a single group-level token sequence. During this process, the sender records, for each token position, the message fragment that would be extracted by the receiver under the standard decoding procedure. Because the sender also knows the intended payload, it can align the receiver-view extracted stream with the target stream at the token level and identify the token positions at which the extracted fragment is inconsistent with the intended one.

Each detected mismatch gives rise to one correction item. A correction item specifies (i) the position of the token whose extracted fragment is erroneous and (ii) the replacement fragment that should be used at that position. Collecting all such items over the group yields the correction message to be embedded in the auxiliary channel.

\subsection{Receiver-Side Recovery}
\label{app:receiver-side-recovery}

On the receiver side, all primary-channel samples in the group are first decoded in the standard manner. The receiver likewise concatenates the corresponding receiver-view token sequences in transmission order and records the extracted message fragment associated with each token position. This produces a group-level extracted stream together with its token-level segmentation.

The receiver then decodes the correction sample transmitted through the auxiliary channel and parses the resulting correction message sequentially. For each correction item, the receiver locates the designated token position in the group-level token sequence and replaces the originally extracted fragment at that position with the replacement fragment carried in the correction message. After all correction items have been applied, the corrected group-level message stream is partitioned back into the message streams corresponding to the individual samples in the group, yielding the final recovered payload for each sample.

\subsection{Encoding of the Correction Message}
\label{app:encoding-correction-message}

Each correction message consists of a count field followed by a sequence of correction items. The count field is encoded with fixed length and specifies the number of correction items contained in the message. If this field is all zeros, the message indicates that no correction is required for the current group, and the receiver terminates correction processing immediately after reading the count field.

Each correction item contains two components: a position field and a replacement field. The position field identifies the token position to be corrected in the group-level token sequence. To reduce overhead, positions are encoded with dynamic length rather than with a uniform fixed-width representation. Concretely, after each correction item has been parsed, both sender and receiver update the range of token positions that remain to be described; the bit width of the next position field is then determined from that remaining range. This design avoids unnecessary positional redundancy, especially when only a small number of tokens in a long group require correction.

The replacement field carries the corrected message fragment for the designated token position. No explicit length field is needed for this component. Because the receiver has already recorded the token-level extraction result for the group, once a target token position is identified, the receiver also knows the length of the originally extracted fragment at that position. The replacement field can therefore be parsed using that known length, after which the erroneous fragment is directly substituted with the corrected one.

Overall, the correction message is parsed from left to right: the receiver first reads the count field, then iteratively parses each correction item by decoding its position field, determining the required replacement length from the original token-level extraction record, and reading the corresponding replacement bits. This process continues until all correction items specified by the count field have been consumed.


\section{Additional Experimental Results}
\subsection{ReTokSync across Different Steganographic Algorithms}
\label{app:exp-alg-compatibility}

To evaluate the generality of ReTokSync, we further examine its performance across a diverse set of generative steganographic algorithms in both English and Chinese settings. Specifically, we use Llama-3.1-8B \cite{grattafiori2024llama3herd} and Qwen3-8B \cite{yang2025qwen3} as the underlying language models and conduct experiments on the English and Chinese versions of the IMDB dataset \cite{maas2011learning}, respectively. Under top-$k=128$, we generate 500 texts of length 100 and compare each original algorithm (\emph{Baseline}) with its ReTokSync-enhanced counterpart from three perspectives: effectiveness, security, and efficiency. The evaluated methods span six representative families of generative steganography, namely Discop \cite{ding2023discop}, Meteor \cite{kaptchuk2021meteor}, AC-based Steganography \cite{ziegler2019neural}, ADG \cite{zhang2021provably}, Shimmer \cite{bai2025shimmer}, and the Framework-based Scheme \cite{liao2025framework}, together with their corresponding variants.

\begin{table*}[t]
\centering
\small
\setlength{\tabcolsep}{4pt}
\caption{Experimental Results of ReTokSync on Different Steganographic Algorithms with Llama-3.1-8B in the English Setting.}
\label{tab:compatibility-english}

\resizebox{\textwidth}{!}{
\begin{tabular}{c|c|c S S S S S S c S}
\toprule
\multicolumn{1}{c|}{Algorithm} &
\multicolumn{1}{c|}{Variant} &
\multicolumn{1}{c}{Scheme} &
\multicolumn{1}{c}{Ave PPL} &
\multicolumn{1}{c}{Ave KLD} &
\multicolumn{1}{c}{Max KLD} &
\multicolumn{1}{c}{Capacity} &
\multicolumn{1}{c}{Utilization} &
\multicolumn{1}{c}{Total Time} &
\multicolumn{1}{c}{RTO} &
\multicolumn{1}{c}{Accuracy} \\
\multicolumn{1}{c|}{} &
\multicolumn{1}{c|}{} &
\multicolumn{1}{c}{} &
\multicolumn{1}{c}{} &
\multicolumn{1}{c}{$(\times 10^{-2})$} &
\multicolumn{1}{c}{$(\times 10^{-2})$} &
\multicolumn{1}{c}{(bits/token)} &
\multicolumn{1}{c}{} &
\multicolumn{1}{c}{(s)} &
\multicolumn{1}{c}{(\%)} &
\multicolumn{1}{c}{} \\
\midrule
\multirow{4}{*}{Discop~\cite{ding2023discop}}
& \multirow{2}{*}{Base}
& Baseline  & 8.40 & 0.00 & 0.00 & 1.36 & 0.45 & 5.10 & 0.00 & \multicolumn{1}{c}{--} \\
&
& ReTokSync & 8.39 & 0.00 & 0.00 & 1.36 & 0.45 & 5.15 & 1.16 & 1.000 \\
\cmidrule(lr){2-11}
& \multirow{2}{*}{Huffman}
& Baseline  & 9.67 & 0.00 & 0.00 & 2.88 & 0.91 & 5.13 & 0.00 & \multicolumn{1}{c}{--} \\
&
& ReTokSync & 9.61 & 0.00 & 0.00 & 2.86 & 0.91 & 6.49 & 26.62 & 0.998 \\
\midrule

\multirow{4}{*}{Meteor~\cite{kaptchuk2021meteor}}
& \multirow{2}{*}{Meteor}
& Baseline  & 10.26 & 0.00 & 0.01 & 2.12 & 0.65 & 5.41 & 0.00 & \multicolumn{1}{c}{--} \\
&
& ReTokSync & 10.26 & 0.00 & 0.01 & 2.12 & 0.65 & 6.00 & 10.89 & 0.999 \\
\cmidrule(lr){2-11}
& \multirow{2}{*}{Meteor-R}
& Baseline  & 10.27 & 0.00 & 0.01 & 2.20 & 0.67 & 31.92 & 0.00 & \multicolumn{1}{c}{--} \\
&
& ReTokSync & 10.27 & 0.00 & 0.01 & 2.19 & 0.67 & 34.60 & 8.39 & 0.998 \\
\midrule

\multirow{2}{*}{AC~\cite{ziegler2019neural}}
& \multirow{2}{*}{AC}
& Baseline  & 10.00 & 0.01 & 0.31 & 3.20 & 0.99 & 2.52 & 0.00 & \multicolumn{1}{c}{--} \\
&
& ReTokSync & 10.20 & 0.01 & 0.31 & 3.22 & 0.99 & 2.77 & 9.92 & 0.998 \\
\midrule

\multirow{2}{*}{ADG~\cite{zhang2021provably}}
& \multirow{2}{*}{ADG}
& Baseline  & 9.47 & 0.48 & 3.04 & 2.01 & 0.64 & 5.07 & 0.00 & \multicolumn{1}{c}{--} \\
&
& ReTokSync & 9.21 & 0.47 & 3.07 & 1.98 & 0.64 & 5.79 & 14.20 & 0.999 \\
\midrule

\multirow{4}{*}{Shimmer~\cite{bai2025shimmer}}
& \multirow{2}{*}{Shimmer}
& Baseline  & 10.28 & 0.00 & 0.00 & 2.71 & 0.81 & 4.99 & 0.00 & \multicolumn{1}{c}{--} \\
&
& ReTokSync & 10.30 & 0.00 & 0.00 & 2.73 & 0.82 & 5.28 & 5.81 & 0.999 \\
\cmidrule(lr){2-11}
& \multirow{2}{*}{Shimmer-R}
& Baseline  & 10.49 & 0.00 & 0.00 & 2.75 & 0.82 & 5.03 & 0.00 & \multicolumn{1}{c}{--} \\
&
& ReTokSync & 10.49 & 0.00 & 0.00 & 2.77 & 0.83 & 5.29 & 5.17 & 0.999 \\
\midrule

\multirow{6}{*}{Framework-based~\cite{liao2025framework}}
& \multirow{2}{*}{Differential}
& Baseline  & 11.66 & 0.00 & 0.00 & 2.36 & 0.69 & 5.10 & 0.00 & \multicolumn{1}{c}{--} \\
&
& ReTokSync & 11.66 & 0.00 & 0.00 & 2.36 & 0.69 & 5.11 & 0.20 & 0.999 \\
\cmidrule(lr){2-11}
& \multirow{2}{*}{Stability}
& Baseline  & 11.23 & 0.00 & 0.00 & 2.08 & 0.62 & 5.09 & 0.00 & \multicolumn{1}{c}{--} \\
&
& ReTokSync & 11.26 & 0.00 & 0.00 & 2.07 & 0.62 & 5.88 & 15.52 & 0.999 \\
\cmidrule(lr){2-11}
& \multirow{2}{*}{Binary}
& Baseline  & 11.23 & 0.00 & 0.00 & 1.88 & 0.55 & 5.81 & 0.00 & \multicolumn{1}{c}{--} \\
&
& ReTokSync & 11.05 & 0.00 & 0.00 & 1.87 & 0.55 & 6.52 & 12.22 & 0.999 \\
\bottomrule
\end{tabular}
}
\end{table*}

\begin{table*}[t]
\centering
\small
\setlength{\tabcolsep}{4pt}
\caption{Experimental Results of ReTokSync on Different Steganographic Algorithms with Qwen3-8B in the Chinese Setting.}
\label{tab:compatibility-chinese}

\resizebox{\textwidth}{!}{
\begin{tabular}{c|c|c S S S S S S c S}
\toprule
\multicolumn{1}{c|}{Algorithm} &
\multicolumn{1}{c|}{Variant} &
\multicolumn{1}{c}{Scheme} &
\multicolumn{1}{c}{Ave PPL} &
\multicolumn{1}{c}{Ave KLD} &
\multicolumn{1}{c}{Max KLD} &
\multicolumn{1}{c}{Capacity} &
\multicolumn{1}{c}{Utilization} &
\multicolumn{1}{c}{Total Time} &
\multicolumn{1}{c}{RTO} &
\multicolumn{1}{c}{Accuracy} \\
\multicolumn{1}{c|}{} &
\multicolumn{1}{c|}{} &
\multicolumn{1}{c}{} &
\multicolumn{1}{c}{} &
\multicolumn{1}{c}{$(\times 10^{-2})$} &
\multicolumn{1}{c}{$(\times 10^{-2})$} &
\multicolumn{1}{c}{(bits/token)} &
\multicolumn{1}{c}{} &
\multicolumn{1}{c}{(s)} &
\multicolumn{1}{c}{(\%)} &
\multicolumn{1}{c}{} \\
\midrule

\multirow{4}{*}{Discop~\cite{ding2023discop}}
& \multirow{2}{*}{Base}
& Baseline  & 6.89 & 0.00 & 0.00 & 1.16 & 0.44 & 7.07 & 0.00 & \multicolumn{1}{c}{--} \\
&
& ReTokSync & 6.89 & 0.00 & 0.00 & 1.17 & 0.44 & 7.56 & 6.93 & \multicolumn{1}{c}{--} \\
\cmidrule(lr){2-11}
& \multirow{2}{*}{Huffman}
& Baseline  & 7.70 & 0.00 & 0.00 & 2.45 & 0.90 & 7.07 & 0.00 & \multicolumn{1}{c}{--} \\
&
& ReTokSync & 7.75 & 0.00 & 0.00 & 2.47 & 0.90 & 7.54 & 6.61 & 0.998 \\
\midrule

\multirow{4}{*}{Meteor~\cite{kaptchuk2021meteor}}
& \multirow{2}{*}{Meteor}
& Baseline  & 7.26 & 0.00 & 0.01 & 1.69 & 0.63 & 7.82 & 0.00 & \multicolumn{1}{c}{--} \\
&
& ReTokSync & 7.02 & 0.00 & 0.01 & 1.66 & 0.63 & 8.47 & 8.22 & 0.998 \\
\cmidrule(lr){2-11}
& \multirow{2}{*}{Meteor-R}
& Baseline  & 7.53 & 0.00 & 0.02 & 1.81 & 0.66 & 53.35 & 0.00 & \multicolumn{1}{c}{--} \\
&
& ReTokSync & 7.52 & 0.00 & 0.02 & 1.82 & 0.66 & 53.60 & 0.47 & 0.998 \\
\midrule

\multirow{2}{*}{AC~\cite{ziegler2019neural}}
& \multirow{2}{*}{AC}
& Baseline  & 6.79 & 0.01 & 0.29 & 2.52 & 0.98 & 3.55 & 0.00 & \multicolumn{1}{c}{--} \\
&
& ReTokSync & 6.76 & 0.01 & 0.25 & 2.52 & 0.98 & 3.64 & 2.41 & 0.998 \\
\midrule

\multirow{2}{*}{ADG~\cite{zhang2021provably}}
& \multirow{2}{*}{ADG}
& Baseline  & 7.67 & 0.41 & 2.77 & 1.70 & 0.61 & 7.15 & 0.00 & \multicolumn{1}{c}{--} \\
&
& ReTokSync & 7.70 & 0.41 & 2.78 & 1.69 & 0.61 & 7.73 & 8.08 & 0.997 \\
\midrule

\multirow{4}{*}{Shimmer~\cite{bai2025shimmer}}
& \multirow{2}{*}{Shimmer}
& Baseline  & 6.72 & 0.00 & 0.00 & 2.04 & 0.76 & 7.09 & 0.00 & \multicolumn{1}{c}{--} \\
&
& ReTokSync & 6.63 & 0.00 & 0.00 & 2.01 & 0.75 & 7.28 & 2.68 & 0.995 \\
\cmidrule(lr){2-11}
& \multirow{2}{*}{Shimmer-R}
& Baseline  & 6.75 & 0.00 & 0.00 & 2.08 & 0.77 & 7.03 & 0.00 & \multicolumn{1}{c}{--} \\
&
& ReTokSync & 6.68 & 0.00 & 0.00 & 2.05 & 0.76 & 7.23 & 2.83 & 0.996 \\
\midrule

\multirow{6}{*}{Framework-based~\cite{liao2025framework}}
& \multirow{2}{*}{Differential}
& Baseline  & 7.33 & 0.00 & 0.00 & 1.79 & 0.69 & 7.25 & 0.00 & \multicolumn{1}{c}{--} \\
&
& ReTokSync & 7.24 & 0.00 & 0.00 & 1.78 & 0.69 & 10.51 & 45.05 & 0.999 \\
\cmidrule(lr){2-11}
& \multirow{2}{*}{Stability}
& Baseline  & 7.60 & 0.00 & 0.00 & 1.61 & 0.61 & 7.18 & 0.00 & \multicolumn{1}{c}{--} \\
&
& ReTokSync & 7.70 & 0.00 & 0.00 & 1.63 & 0.61 & 8.56 & 19.13 & 0.999 \\
\cmidrule(lr){2-11}
& \multirow{2}{*}{Binary}
& Baseline  & 7.58 & 0.00 & 0.00 & 1.43 & 0.53 & 7.84 & 0.00 & \multicolumn{1}{c}{--} \\
&
& ReTokSync & 7.60 & 0.00 & 0.00 & 1.43 & 0.53 & 10.98 & 40.18 & 1.000 \\
\bottomrule
\end{tabular}
}
\end{table*}

As shown in Tables~\ref{tab:compatibility-english} and~\ref{tab:compatibility-chinese}, ReTokSync remains broadly compatible with heterogeneous generative steganographic mechanisms. Although its deployment requires method-specific state synchronization, these adaptations do not alter the underlying embedding logic. Instead, they allow the original schemes to operate within the corrective-reset framework while largely preserving their original security and efficiency characteristics.

This trend can be observed consistently across different algorithm families. In the English setting, for example, the Discop-Base variant remains nearly unchanged after integration with ReTokSync: the average PPL changes only from 8.40 to 8.39, the embedding capacity remains 1.36 bits/token, and the total runtime increases only slightly from 5.10\,s to 5.15\,s, while extraction accuracy reaches 1.000. A similar pattern is observed for Meteor-R in the Chinese setting, where the total runtime changes only marginally from 53.35\,s to 53.60\,s and the extraction accuracy remains high at 0.998. Even in comparatively less favorable cases, the degradation remains limited in nature. For instance, for the Framework-based Differential variant with Qwen3-8B, ReTokSync preserves zero average and maximum KLD and achieves 0.999 extraction accuracy, although the runtime overhead rises to 45.05\%, indicating that the adaptation cost is method-dependent rather than stemming from any loss of distributional security.

Overall, the results suggest that ReTokSync transfers well across substantially different embedding mechanisms. Across the reported settings with available accuracy values, the ReTokSync-enhanced variants consistently achieve extraction accuracy of at least 99.5\%, while maintaining security and efficiency close to those of their corresponding baselines.

\subsection{Communication Performance under Different Group Sizes}
\label{sec:appendix-comm-performance}

To further evaluate the proposed two-channel communication mechanism, we conduct experiments under different group sizes in both English and Chinese settings. ReTokSync is used as the primary channel, while Syncpool serves as the auxiliary correction channel. We use Llama-3.1-8B and Qwen3-8B as the underlying language models and evaluate on the English and Chinese versions of the IMDB dataset, respectively. For each setting, we generate 100 steganographic samples of length 100. In the primary channel, the top-$k$ value is varied over $\{32, 64, 128, 256, 512\}$, while the auxiliary channel consistently uses Syncpool with top-$k=64$. We report results for group sizes of 5, 10, and 20. For the primary channel, we report the utilization and the average number of residual bit errors per group. For the auxiliary channel, we report the average correction-message length, the maximum correction-message length, the auxiliary-channel utilization, the ratio of correction-message length to the total embedded payload length, and the final correction success rate.

Several consistent observations emerge from Tables~\ref{tab:comm_perf_5},~\ref{tab:comm_perf_10}, and~\ref{tab:comm_perf_20}. First, the primary channel remains stable across different group sizes. For both Llama-3.1-8B and Qwen3-8B, the primary-channel utilization stays within a narrow range of 0.88--0.92 across all evaluated settings, indicating that grouped correction does not materially affect the payload efficiency of the main transmission. Second, the correction success rate remains at 100\% in all reported configurations, showing that the residual errors produced by the primary channel can be fully repaired under the proposed two-channel design.

\begin{table*}[t]
\caption{Communication Performance of the Two-Channel Mechanism under Different Top-$k$ Settings ($5$ Samples per Group)}
\label{tab:comm_perf_5}
\centering
\renewcommand{\arraystretch}{1.12}
\begin{tabular*}{\textwidth}{@{\extracolsep{\fill}}lcccccccc@{}}
\toprule
\multirow{2}{*}{\textbf{Models}}
& \multirow{2}{*}{\textbf{Top-$k$}}
& \multicolumn{2}{c}{\textbf{Primary Channel}}
& \multicolumn{5}{c}{\textbf{Auxiliary Channel}} \\
\cmidrule(lr){3-4} \cmidrule(lr){5-9}
&
& \makecell[t]{\textbf{Utilization}}
& \makecell[t]{\textbf{Ave Error} \\ \textbf{(bit)}}
& \makecell[t]{\textbf{Ave Corr. Len.} \\ \textbf{(bit)}}
& \makecell[t]{\textbf{Max. Corr. Len.} \\ \textbf{(bit)}}
& \makecell[t]{\textbf{Utilization}}
& \makecell[t]{\textbf{Corr./Emb. Ratio} \\ \textbf{(\%)}}
& \makecell[t]{\textbf{Success} \\ \textbf{(\%)}} \\
\midrule

\multirow{5}{*}{\makecell[c]{Llama-3.1-8B~\cite{grattafiori2024llama3herd}\\(English)}}
& 32  & 0.90 & 0.39 & 7.97  & 44 & 0.39 & 0.675 & 100 \\
& 64  & 0.90 & 0.70 & 10.17 & 57 & 0.39 & 0.751 & 100 \\
& 128 & 0.91 & 0.91 & 10.90 & 68 & 0.40 & 0.729 & 100 \\
& 256 & 0.91 & 1.03 & 11.57 & 45 & 0.43 & 0.705 & 100 \\
& 512 & 0.92 & 1.48 & 13.21 & 84 & 0.38 & 0.738 & 100 \\
\midrule

\multirow{5}{*}{\makecell[c]{Qwen3-8B~\cite{yang2025qwen3}\\(Chinese)}}
& 32  & 0.88 & 0.61 & 9.19  & 41 & 0.50 & 0.962 & 100 \\
& 64  & 0.89 & 0.76 & 9.80  & 41 & 0.59 & 0.907 & 100 \\
& 128 & 0.91 & 1.46 & 13.18 & 58 & 0.57 & 1.089 & 100 \\
& 256 & 0.91 & 2.61 & 17.53 & 72 & 0.61 & 1.333 & 100 \\
& 512 & 0.92 & 2.20 & 15.39 & 65 & 0.60 & 1.084 & 100 \\
\bottomrule
\end{tabular*}
\end{table*}

\begin{table*}[t]
\caption{Communication Performance of the Two-Channel Mechanism under Different Top-$k$ Settings ($10$ Samples per Group)}
\label{tab:comm_perf_10}
\centering
\renewcommand{\arraystretch}{1.12}
\begin{tabular*}{\textwidth}{@{\extracolsep{\fill}}lcccccccc@{}}
\toprule
\multirow{2}{*}{\textbf{Models}}
& \multirow{2}{*}{\textbf{Top-$k$}}
& \multicolumn{2}{c}{\textbf{Primary Channel}}
& \multicolumn{5}{c}{\textbf{Auxiliary Channel}} \\
\cmidrule(lr){3-4} \cmidrule(lr){5-9}
&
& \makecell[t]{\textbf{Utilization}}
& \makecell[t]{\textbf{Ave Error} \\ \textbf{(bit)}}
& \makecell[t]{\textbf{Ave Corr. Len.} \\ \textbf{(bit)}}
& \makecell[t]{\textbf{Max. Corr. Len.} \\ \textbf{(bit)}}
& \makecell[t]{\textbf{Utilization}}
& \makecell[t]{\textbf{Corr./Emb. Ratio} \\ \textbf{(\%)}}
& \makecell[t]{\textbf{Success} \\ \textbf{(\%)}} \\
\midrule

\multirow{5}{*}{\makecell[c]{Llama-3.1-8B~\cite{grattafiori2024llama3herd}\\(English)}}
& 32  & 0.90 & 0.78 & 10.28 & 47  & 0.39 & 0.436 & 100 \\
& 64  & 0.90 & 1.40 & 14.82 & 66  & 0.39 & 0.547 & 100 \\
& 128 & 0.91 & 1.82 & 16.44 & 76  & 0.36 & 0.550 & 100 \\
& 256 & 0.91 & 2.06 & 17.98 & 58  & 0.39 & 0.548 & 100 \\
& 512 & 0.92 & 2.96 & 21.30 & 109 & 0.38 & 0.595 & 100 \\
\midrule

\multirow{5}{*}{\makecell[c]{Qwen3-8B~\cite{yang2025qwen3}\\(Chinese)}}
& 32  & 0.88 & 1.22 & 12.84 & 60 & 0.57 & 0.672 & 100 \\
& 64  & 0.89 & 1.52 & 14.22 & 54 & 0.53 & 0.658 & 100 \\
& 128 & 0.91 & 2.92 & 21.22 & 74 & 0.61 & 0.877 & 100 \\
& 256 & 0.91 & 5.22 & 30.16 & 98 & 0.66 & 1.147 & 100 \\
& 512 & 0.92 & 4.40 & 25.94 & 71 & 0.60 & 0.913 & 100 \\
\bottomrule
\end{tabular*}
\end{table*}

\begin{table*}[t]
\caption{Communication Performance of the Two-Channel Mechanism under Different Top-$k$ Settings ($20$ Samples per Group)}
\label{tab:comm_perf_20}
\centering
\renewcommand{\arraystretch}{1.12}
\begin{tabular*}{\textwidth}{@{\extracolsep{\fill}}lcccccccc@{}}
\toprule
\multirow{2}{*}{\textbf{Models}}
& \multirow{2}{*}{\textbf{Top-$k$}}
& \multicolumn{2}{c}{\textbf{Primary Channel}}
& \multicolumn{5}{c}{\textbf{Auxiliary Channel}} \\
\cmidrule(lr){3-4} \cmidrule(lr){5-9}
&
& \makecell[t]{\textbf{Utilization}}
& \makecell[t]{\textbf{Ave Error} \\ \textbf{(bit)}}
& \makecell[t]{\textbf{Ave Corr. Len.} \\ \textbf{(bit)}}
& \makecell[t]{\textbf{Max. Corr. Len.} \\ \textbf{(bit)}}
& \makecell[t]{\textbf{Utilization}}
& \makecell[t]{\textbf{Corr./Emb. Ratio} \\ \textbf{(\%)}}
& \makecell[t]{\textbf{Success} \\ \textbf{(\%)}} \\
\midrule

\multirow{5}{*}{\makecell[c]{Llama-3.1-8B~\cite{grattafiori2024llama3herd}\\(English)}}
& 32  & 0.90 & 1.56 & 15.04 & 48  & 0.41 & 0.319 & 100 \\
& 64  & 0.90 & 2.80 & 24.88 & 67  & 0.39 & 0.459 & 100 \\
& 128 & 0.91 & 3.64 & 27.96 & 87  & 0.44 & 0.468 & 100 \\
& 256 & 0.91 & 4.12 & 31.68 & 83  & 0.43 & 0.483 & 100 \\
& 512 & 0.92 & 5.92 & 38.04 & 115 & 0.38 & 0.531 & 100 \\
\midrule

\multirow{5}{*}{\makecell[c]{Qwen3-8B~\cite{yang2025qwen3}\\(Chinese)}}
& 32  & 0.88 & 2.44 & 20.96 & 63  & 0.71 & 0.549 & 100 \\
& 64  & 0.89 & 3.04 & 23.36 & 59  & 0.66 & 0.541 & 100 \\
& 128 & 0.91 & 5.84 & 38.32 & 103 & 0.60 & 0.792 & 100 \\
& 256 & 0.91 & 10.44 & 56.56 & 112 & 0.70 & 1.075 & 100 \\
& 512 & 0.92 & 8.80 & 48.04 & 133 & 0.65 & 0.846 & 100 \\
\bottomrule
\end{tabular*}
\end{table*}

Third, although the absolute correction length generally increases with group size, the relative correction overhead remains small and often becomes more favorable when more samples are corrected jointly. For example, in the Llama-3.1-8B setting with top-$k=32$, the average correction-message length increases from 7.97 bits for groups of 5 to 10.28 bits for groups of 10 and 15.04 bits for groups of 20, while the correction-to-embedding ratio decreases from 0.675\% to 0.436\% and then to 0.319\%. A similar trend can be observed in the Chinese setting. For Qwen3-8B with top-$k=128$, the average correction-message length rises from 13.18 to 21.22 and 38.32 bits as the group size increases from 5 to 10 and 20, whereas the correction-to-embedding ratio decreases from 1.089\% to 0.877\% and 0.792\%, respectively. These results indicate that larger groups require longer correction messages in absolute terms, but allow the correction cost to be amortized over a larger primary payload.

At the same time, the residual errors in the primary channel remain limited in scale. Even in relatively challenging settings, the average number of residual bit errors per group remains modest compared with the total payload carried by the group. For instance, with Llama-3.1-8B and top-$k=512$, the average number of residual bit errors increases from 1.48 for groups of 5 to 2.96 for groups of 10 and 5.92 for groups of 20, while the corresponding correction-to-embedding ratio remains below 0.8\% in all three cases. For Qwen3-8B, the largest reported correction-to-embedding ratio is 1.333\% at top-$k=256$ with groups of 5, and it remains at or below 1.147\% for groups of 10 and 1.075\% for groups of 20. Taken together, these results show that the residual deviations produced by ReTokSync are highly sparse, so that they can be repaired using only a short auxiliary correction message.

Overall, the results confirm the complementarity of the two channels. ReTokSync provides an efficient primary channel whose residual errors remain sparse and well structured, while Syncpool provides a reliable auxiliary channel that repairs these errors with 100\% success and modest overhead. This makes the proposed mechanism a practical design for continuous covert communication, as it preserves high primary-channel efficiency while enabling exact end-to-end recovery through lightweight structured correction.

\subsection{Qualitative Stegotext Examples}
\label{app:qualitative-stegotext-examples}

To provide a qualitative view of the generated stegotext, we present representative examples in both English and Chinese. Specifically, the English example is generated with Llama-3.1-8B, and the Chinese example is generated with Qwen3-8B. For each case, we compare the stegotext produced by the baseline scheme with that generated after applying ReTokSync under the same input context. The examples are shown in Tables~\ref{tab:stegotext-example-english} and~\ref{tab:stegotext-example-chinese}.

These examples illustrate that ReTokSync does not introduce obvious qualitative degradation in the surface text. In both languages, the generated outputs remain fluent and semantically coherent, while still reflecting the variability inherent in stochastic steganographic generation. Since the English and Chinese examples are generated from different contexts, we present them in separate tables for clarity.

\begin{table*}[t]
\centering
\caption{Qualitative English stegotext example generated by Llama-3.1-8B.}
\label{tab:stegotext-example-english}
\small
\renewcommand{\arraystretch}{1.08}
\begin{tabular}{@{}p{\textwidth}@{}}
\toprule

\textbf{Context:} \\
\emph{I was excited to watch a Catalu\~na film in Berlin's competition. But after the screening, I was totally disappointed and furious: too much blood, too much time, too many themes for nothing.} \\

\midrule

\textbf{Stegotext (Discop\cite{ding2023discop}):} \\
\emph{Something like ``B\'endezinhao'' isn't it! That film has blood inside and is simply ``an animal''. Very good acting, but only four people are in focus; the others on screen stay in the background, like shadows. This afternoon at Cine France, the second film was the Mexican film ``El Hombre sin Sombra''. Incredible film---simply incredible.} \\

\midrule

\textbf{Stegotext (ReTokSync):} \\
\emph{Something like ``B\'endezinhao'' isn't it! That film has blood inside and is simply ``an animal''. Very good acting, but only four people are in focus; the others on screen stay in the background, like shadows. This afternoon at Cine France, the second film, the Mexican film ``El Hombre sin Sombra'', wasn't well executed: some very bad films and photo projections. I don't like it.} \\

\bottomrule
\end{tabular}
\end{table*}

\begin{table*}[t]
\centering
\caption{Qualitative Chinese stegotext example generated by Qwen3-8B.}
\label{tab:stegotext-example-chinese}
\small
\renewcommand{\arraystretch}{1.08}
\begin{CJK*}{UTF8}{gbsn}
\begin{tabular}{@{}p{\textwidth}@{}}
\toprule

\textbf{Context:} \\
这部电影从来没有在我们地区的影院上映，所以当它在DVD上出现时，我是第一个租借它的人之一。 \\

\midrule

\textbf{Stegotext (Discop\cite{ding2023discop}):} \\
但是，不是很令人满意。我觉得：主角扮演得不够好，但是最差的是，拍摄手法很抽劣。其次，电影的节奏太慢，没有必要地拖长时间，这导致了漫长的沉默，使人们感到很无聊。最后，它的结局很奇怪。我感觉这个电影没有达到预期，没有让我满意。 \\

\midrule

\textbf{Stegotext (ReTokSync):} \\
但是，不是很令人满意。我觉得：主角扮演得不够好，但是最差的是，拍摄手法很抽劣。其次，电影的节奏太慢，长途旅行的场景过于拖沓。唯一的亮点是男主角的表演，但我觉得他的表演过于做作，而且在某些场景里简直崩溃了。 \\

\bottomrule
\end{tabular}
\end{CJK*}
\end{table*}

\end{document}